    \newcommand{\expect}[1]{\mathbb{E}[{#1}]}
\newtheorem{definition}{Definition}
\newtheorem{proposition}{Proposition}
\newtheorem{lemma}{Lemma}
\newtheorem{corollary}{Corollary}
\newtheorem{theorem}{Theorem}
\begin{document}

\title{Fast Neighborhood Rendezvous }

\author{Ryota Eguchi\footnote{Nagoya Institute of Technology, Nagoya, Aichi, Japan. E-mail: 30514002@stn.ntiech.ac.jp} 
\and Naoki Kitamura\footnote{Nagoya Institute of Technology, Nagoya, Aichi, Japan. E-mail: ktmr522@yahoo.co.jp.} 
\and Taisuke Izumi\footnote{Osaka University, Suita, Osaka, Japan. E-mail: t-izumi@ist.osaka-u.ac.jp.}}

\date{}

\maketitle

\begin{abstract} 
    In the rendezvous problem, two computing entities (called \emph{agents}) located at different vertices in a graph have to meet at the same vertex. In this paper, we consider the synchronous \emph{neighborhood rendezvous problem},
    where the agents are initially located at two adjacent vertices. While
    this problem can be trivially solved in $O(\Delta)$ rounds ($\Delta$
    is the maximum degree of the graph), it is highly challenging to 
    reveal whether that problem can be solved in $o(\Delta)$ rounds, even 
    assuming the rich computational capability of agents. The only known
    result is that the time complexity of $O(\sqrt{n})$ rounds is achievable
    if the graph is complete and agents are probabilistic, asymmetric, and can use 
    whiteboards placed at vertices. Our main contribution is to clarify the situation (with respect to 
    computational models and graph classes) admitting such a sublinear-time rendezvous algorithm. 
    More precisely, we present two algorithms achieving fast rendezvous additionally assuming 
    bounded minimum degree, unique vertex identifier, accessibility to neighborhood IDs, and randomization.
    The first algorithm runs within $\tilde{O}(\sqrt{n\Delta/\delta} + n/\delta)$ rounds for graphs of the minimum degree larger than $\sqrt{n}$, where $n$ is the number of vertices in the graph, and $\delta$ is the minimum degree of the graph. The second algorithm assumes that 
    the largest vertex ID is $O(n)$, and achieves $\tilde{O}\left( \frac{n}{\sqrt{\delta}} \right)$-round time complexity without using whiteboards. These algorithms attain
    $o(\Delta)$-round complexity in the case of $\delta = {\omega}(\sqrt{n} \log n)$ and $\delta = \omega(n^{2/3} \log^{4/3} n)$ 
    respectively. We also prove that four unconventional assumptions of our algorithm, 
    bounded minimum degree, accessibility to neighborhood IDs, initial distance one, and randomization are all 
    inherently necessary for attaining fast rendezvous. That is, one can obtain the $\Omega(n)$-round lower bound 
    if either one of them is removed.
    \end{abstract}
    
\footnote[0]{A preliminary subset of this work appeared in the
proceedings of ICDCS 2020~\cite{preliminaryversion}}

\section{Introduction}
\label{sec:introduction}

\subsection{Background}
The \emph{rendezvous problem} is well-studied in distributed computing theory. 
A typical setting of the problem requires two agents located at any vertices 
in a graph $G=(V, E)$ to meet and halt. This is recognized as a 
fundamental problem for designing distributed algorithms of mobile agents.
The hardness of symmetry breaking is often seen as an essential difficulty of the rendezvous problem.
For example, we consider a ring network of four vertices, and the situation that the two agents located at two vertices that are not adjacent to each 
other\footnote{More precisely, each port number assigned to each edge is also symmetric (for example, edges of clockwise direction have port number one, and edges of counter-clockwise direction have port number zero.).}. Then, the agents running the same algorithm symmetrically move and thus their relative distance two is kept forever. That is, any deterministic algorithm does not achieve rendezvous in this situation. To make the rendezvous problem solvable, the system model must be equipped with some mechanism enabling two agents to move asymmetrically.  Much of the previous work focuses on what models or assumptions provide such a capability~\cite{kranakis2003mobile,flocchini2004multiple,de2006asynchronous}.

Unlike the viewpoint mentioned above, we assume a model that easily breaks symmetry, i.e., 
allowing randomized and/or asymmetric algorithms, and focuses on the time complexity of the rendezvous 
problem. 
When we allow two agents to run different algorithms, the rendezvous problem can be solved using graph exploration. Specifically, one of the agents halts at the initial location and the other one traverses all the vertices. Hence the time complexity of 
graph exploration is a trivial upper bound for the rendezvous problem. 
In contrast, the half of the initial distance between two agents is a trivial 
lower bound for the problem. Since both of the bounds can be $\Theta(n)$ in a specific class of $n$-vertex 
instances (e.g., a ring network of $n$ vertices) 
the exploration-based approach is existentially optimal, but not universally
optimal. When the initial distance is small in terms of $n$, the approach 
based on graph exploration does not necessarily exhibit optimal algorithms.
However, due to the unavailability of the location information of other agents, 
achieving rendezvous without exploring all vertices is a highly non-trivial
challenge, even if we assume stronger capability of agents such as randomization,
asymmetry, and non-obliviousness.

\subsection{Contribution}
In this paper, we consider what instances and what computational power of models (oracles) admit efficient algorithms that do not use exhaustive search strategy, such as graph exploration. As we stated, the key characterization of the instances is distance of initial location of both agents. We consider the initial distance is small in terms of $n$, to avoid $\Omega(n)$ lower bound. In this setting, the meaning of "without exhaustive search" will be clear, namely presenting algorithms that achieves $o(n)$ rounds for rendezvous.  

In this paper, we consider an extreme variant of the rendezvous problem,
called the \emph{neighborhood rendezvous problem}, where two agents 
are initially located at \emph{two adjacent vertices} (i.e., initial distance 
one). This problem can be also seen as a generalized version of the rendezvous
problem in complete graphs~\cite{anderson1990rendezvous} because in that case 
any two agents always have distance one. Since the neighborhood rendezvous problem can be trivially solved 
in $O(\Delta)$ rounds ($\Delta$ is the maximum degree of the graph), the technical challenge
lies in the design of algorithms achieving rendezvous within $o(\Delta)$ rounds. As well as the algorithm shown in~\cite{anderson1990rendezvous}, we assume 
the rich capability of agents (i.e., randomized, asymmetric, and non-oblivious), 
unique vertex identifiers, and the availability of whiteboards placed at each 
vertex. 
 In addition, we assume that agents at a vertex $v$ can know the IDs of all $v$'s neighbors 
 (which is analogous to the \emph{KT1 model} in message passing systems~\cite{peleg2010distributed}).
 Specifically, we present two randomized algorithms. The first algorithm achieves rendezvous within $O \left( \frac{n}{\delta} \log^2 n + \sqrt{\frac{n \Delta}{\delta}}\log n \right)$ rounds with high probability for
graphs whose minimum degree is larger than $\sqrt{n}$.
Thus, this algorithm achieves fast rendezvous (i.e., sublinear of $\Delta$) in graphs with minimum degree 
$\delta = \omega(\sqrt{n} \log n)$. The second algorithm trades the use of whiteboards into the assumption of \emph{tight} naming of vertices, that is, the assumption that  
the largest vertex ID is $O(n)$. It achieves rendezvous within $O \left( \frac{n}{\sqrt{\delta}}\log^2 n \right)$ rounds with high probability\footnote{Throughout this paper, we say that an event $\mathcal{E}$ holds \emph{with high probability} if $\Pr[\mathcal{E}] \ge 1 - 1/n^{O(1)}$ holds}, 
and thus fast rendezvous is attained in the case of $\delta = \omega(n^{2/3} \log^{4/3} n)$. 

On the negative side, we also present the impossibility of sublinear-time rendezvous when we relax the assumptions. There lie four unconventional assumptions for our algorithm, which are bounded minimum degrees, the accessibility to neighborhood IDs, initial distance one, and randomization. Interestingly, the time lower bound of 
$\Omega(n)$ rounds for graphs of $\Delta = \Theta(n)$ is deduced even if we remove only one of them; this implies that our algorithm runs under a minimal assumption.  

\subsection{Related Work}
\label{subsec:related-work}
The solvability and complexity of the rendezvous problem is affected by many 
factors, such as synchrony, randomness of algorithms, graph classes, symmetry of agents, and so on. For that reason it is difficult to compare our results with past literature directly. Nevertheless, several results 
aim to achieve sublinear-time rendezvous explicitly or implicitly. 
Collins et al. \cite{collins2011synchronous} demonstrate that two agents with a common map (i.e., whole information of $G$), which
are initially placed with distance $d$, can achieve rendezvous deterministically 
within $O(d \log^2 n)$ rounds, they also show a nearly tight 
$\Omega(d\log n / \log \log n)$-round lower bound. Das et al. 
\cite{das2014rendezvous} assume that two agents can detect their distance, and present a deterministic rendezvous algorithm within $O(\Delta (d + \log l))$ rounds, where $l$ is the minimum value of the IDs of agents. 
It is also proven that any algorithm requires $\Omega(\Delta (d + \log l / \log \Delta))$ 
rounds in this model. 
The result by Anderson et al. \cite{anderson1990rendezvous} is the closest to our result in the sense
that it assumes no oracle such as maps and distance detection stated above. 
It considers the model of anonymous vertices with whiteboards, and presents a randomized algorithm that achieves rendezvous for complete graphs
in $O(\sqrt{n})$ expected rounds. As we mentioned, the neighborhood rendezvous 
problem can be seen as a relaxation of rendezvous in complete graphs, and thus
we can regard our result as the one extending the graph classes allowing fast rendezvous (using 
a stronger assumption of vertex identifiers).
There are also several studies \cite{miller2015tradeoffs,dereniowski2012drawing,miller2015fast} for 
achieving fast rendezvous using side information coming from oracles (so-called \emph{advice}). 
In this model, agents cannot see the whole map of $G$, but instead can know the (partial) information on their 
initial locations.

Due to the interest on hardness of symmetry breaking, the solvability of 
the rendezvous problem for ring networks has received much consideration in several different 
models~\cite{kranakis2003mobile,flocchini2004multiple,de2006asynchronous}. 
In this context, the analysis of complexity has not received much attention. The study of rendezvous in trees has focused on time and space complexities.
The paper by Baba et al.\cite{baba2013tcs} presents a linear-time (equivalently, 
$O(n)$ time) algorithm under the assumption that agents have $O(n)$ bits 
of memory, and the authors also show its optimality with respect to space in the class of linear-time algorithms. 
Czyzowicz et al. \cite{czyzowicz2014time} generalized this result, and presented an algorithm achieving 
rendezvous in $\Theta(n + n^2 / k)$ rounds for agents having $k$ bits of memory. Fraigniaud et al.~\cite{Fraigniaud2008disc}  
presents the rendezvous algorithm in trees with the optimal memory complexity ($\Theta(\log n)$ bits).
The feasibility of rendezvous in general graphs are also considered in 
several papers~\cite{Demarco2006tcs,Czyzowicz2010soda,czyzowicz2012meet,Bouchard2018ipl}. In paper \cite{czyzowicz2012meet}, the memory requirement for the rendezvous of uniform agents is considered, which presents that $\Theta(\log n)$ bits are necessary and sufficient for two agents in any anonymous graph. Recently, Miller et al.~\cite{miller2016time} consider the trade-offs between time and cost (the number of edges traversed by agents). 

The rendezvous problem allowing randomization is often considered as a part of 
the theory of random walks. The time taken for two tokens to meet at a common
vertex is called the \emph{meeting time}~\cite{tetali1991podc,Bshouty1999ipl}. The rendezvous problem 
in the analyses of Markov chain theory is also considered in the context of operations research~\cite{anderson1990rendezvous,
alpern2002rendezvouslabeled,abbas2008probabilistic, yu1996agent,dani2016codes,weber2012optimal}.

A comprehensive overview of the rendezvous problem can be found in the books by Alpern and Gal~\cite{alpern2006theory} and Alpern et al.~\cite{alpern2013search}, and several 
surveys~\cite{kranakis2006mobile,pelc2012deterministic,alpern2002rendezvous}.

    \section{Preliminaries}
    \label{sec:preliminaries}
    \subsection{Model and Notations}
    In this paper, we consider the rendezvous problem of two agents in any 
    undirected graph $G = (V, E)$ of $n$ vertices. Each vertex in $G$ has a distinct integer identifier in $[0, n'-1]$, where $n'$ satisfies $n' \geq n$ and $n' = n^{O(1)}$. The value of $n'$ is available to each agent. 
    We denote the identifiers of $n$ vertices by $v_0, v_1, \dots, v_{n-1}$. The 
    minimum and maximum degrees of $G$ are respectively denoted by 
    $\delta_G$ and $\Delta_G$. For any vertex $v$, $N_G(v)$ represents the set of vertices adjacent to $v$, i.e., $N_G(v) = \{ v' ~|~ (v, v') \in E \}$. We define
    $N^+_G(v) = N_G(v) \cup \{ v \}$, and also define $N_G(X) = \bigcup_{v \in X} N_G(v)$ and $N^+_G(X) = N_G(X) \cup X$ 
    for any vertex set $X \subseteq V$. We often omit subscript $G$ if it is clear from the context. 
    
    In the system, two computing entities, called \emph{agents}, are placed at two vertices in $G$, which are modeled as probabilistic random access machines. 
    The two agents have distinct names denoted by $a$ and $b$ respectively, and can
    exhibit asymmetric behavior in executions, that is, they can run two 
    different algorithms. Agents are equipped with memory space as their internal states. 
    While we do not assess any assumption on time/space complexity for internal computation of agents, 
    our proposed algorithms terminate within polynomial time, and use $O(n \log n)$-bit memory.
    We denote by $M \subseteq \{ 0, 1 \}^*$ the set of possible internal states of two agents.
    When two agents visit the same vertex, they are aware of the presence of each other. 
    On neighborhood knowledge, we define
    the \emph{local port numbering} of each vertex $v_i$, which is a bijective function 
    $\hat{P}_{v_i} : [0, |N(v_i)| - 1] \to N(v_i)$. We also define the \emph{accessible local port 
    number} $P_{v_i} : [0, |N(v_i)| - 1] \to \mathbb{N}$. Agents can see only $P_{v_i}$ and have no access to $\hat{P}_{v_i}$. The model supporting the access to neighborhood IDs is defined as the assumption that 
    $\hat{P}_{v_i}$ and $P_{v_i}$ are the same function for any $v_i \in V$. On the lower-bound side, we also 
    consider the case where each agent has no access to its neighborhood IDs. It is defined as the model 
    such that $P_{v_i}$ for any $v_i$ is the identity mapping from 
    $[0, |N(v_i)| - 1]$ to $[0, |N(v_i)| - 1]$ (i.e., it does not provide any information of $\hat{P}_{v_i}$).

    Each vertex is equipped with a memory space called \emph{whiteboards}, and 
    an agent at vertex $v$ can access/write to the whiteboard of $v$
    in its internal computation. Formally, we define $W \subseteq \{0, 1\}^{\ast}$ to be the set of 
    possible contents written in each whiteboard. A state of all the whiteboards 
    in $G$ is represented by an $n$-dimensional vector $W^n$ indexed by elements in $V$. While we have no assumption on the size of each whiteboard, 
    $O(\log n)$ bits per vertex suffice for our algorithms. 
    
    Executions of two agents follow synchronous and discrete time steps $t = 0, 1, 2, \dots$ called \emph{rounds}. In every round, an agent at vertex $v$ 
    either stays at the present location or moves to one of its neighbors. An 
    algorithm $\mathcal{A}$ determines which action to take based on the information stored
    in its internal memory, IDs in $N^+(v)$ through the access to $P_{v}$, and the contents of the whiteboard at $v$. We assume that a movement to a neighbor necessarily completes within the current round. In other words, we do not consider the situation where agents are located on edges at the beginning of each round. At each round, agents can modify the whiteboards of their current vertices\footnote{Strictly, we need to define formally the behavior of agents when they are located at the same vertex and attempt to modify the (common) whiteboard. In the rendezvous problem of two agents, however, such a 
    case can be seen as the completion of the algorithm without loss of generality. Thus, we do not care about simultaneous and parallel write operation for the same whiteboard.}.
    Formally, an algorithm is a function $\mathcal{A}: 
    \{ a, b \} \times M \times V \times 2^{\mathbb{N}} \times W \times \{0, 1\}^* \rightarrow M \times \mathbb{N} \times W$. The inputs respectively correspond to the ID of 
    the agent, its internal memory, the IDs of its current location and neighbors (with respect to accessible 
    port numbering functions), 
    the content of the whiteboard at the current location, and 
    random bits. The outputs correspond to the internal state of the agent after the computation, the destination 
    in the following movement (with respect to accessible local port numbers), and the content of the whiteboard left at the current vertex. Note that deterministic algorithms (only used in Section \ref{subsec:deterministiclowerbound}) are defined as the ones such that its behavior is independent of random bits. A \emph{configuration} $C$ at round $t$ is a tuple in $C \in (V \times M)^2 \times W^n $. 
    An \emph{execution} is an infinite sequence of configurations 
    $C_0, C_1, C_2, \dots$. Precisely, letting $v_i^z$ be the location of agent 
    $z \in \{a, b\}$ at round $i$, $m_i^z$ be the internal memory of agent $z$ at round $i$, 
    and $w_{i}^j$ be the the content of the whiteboard of vertex $v_j$ at round $i$,
    a configuration $C_i$ is described as $C_i = (v_i^a, m_i^a, v_i^b, m_i^b, w_i^0, \dots, w_i^{n-1} )$. For any $i \in \mathbb{N}$, every execution must satisfy 
    the following conditions: For any $j \in V \setminus \{ v^a_i, v^b_i \}$ 
    $w_j = w'_j$ holds. For each $i$, there exists $B^a_i, B^b_i
    \in \{ 0,1 \}^*$ such that $\mathcal{A}(a, m_i^a, v_i^a, P_{v_i^a}, w_{v_i^a}, B^a_i) = (m_{i+1}^a, \hat{P}^{-1}_{v_i^a}(v_{i+1}^a), w_{v_i^a}^i) $ and $\mathcal{A}(b, m_i^b, v_i^b, P_{v_i^a}, w_i^{v_i^b}, B^b_i) = (m_{i+1}^b, \hat{P}^{-1}_{v_i^b}(v_{i+1}^b), w_i^{v_i^b}) $ hold, where $P^{-1}_{v_i^a}$ and 
    $P^{-1}_{v_i^b}$ are the inverse mappings of $P_{v_i^a}$ and 
    $P_{v_i^b}$ respectively.

    \subsection{Rendezvous Problem}
    In the rendezvous problem, two agents initially located at two different vertices are required to visit the same 
    vertex simultaneously and halt. Formally, an algorithm completes rendezvous at round $t$ if the two agents are located at
     the same vertex at the beginning of that round\footnote{In the synchronous system, we can assume that once two agents meet at 
    a vertex then they halt without loss of generality. That is, agents that complete rendezvous at round $t$ also
    complete rendezvous at any round $t' > t$.}. This paper considers the rendezvous
    problem with the constraint on initial locations of agents and graph parameters.
    \begin{definition}[Specific Rendezvous]
        For graph $G = (V, E)$, let $I \subseteq V \times V$ be a possible set of initial locations $(v_0^a, v_0^b)$ of two agents. We say that an algorithm $\mathcal{A}$ solves the rendezvous problem for an instance $(G, I)$ with probability $p$ 
        within $t$ rounds, if for any $(v_0^a, v_0^b) \in I$, the execution of 
        $\mathcal{A}$ in $G$ completes rendezvous at round $t$ with probability $p$. Moreover, letting $\mathcal{I} = \{ (G_0, I_0), (G_1, I_1), \dots \}$ be 
        a (possibly infinite) class of instances, 
        we say that an algorithm $\mathcal{A}$ solves the rendezvous problem for class $\mathcal{I}$ with probability $p$ within $f(n)$ rounds for some non-decreasing function
        $f: \mathbb{N} \rightarrow \mathbb{N}$ if for every instance $((V, E), I) \in \mathcal{I}$, algorithm $\mathcal{A}$ solves the rendezvous problem with probability $p$ within $f(|V|)$ rounds. 
    \end{definition}

    In this paper we are interested in the case where the distance between two 
    initial locations of agents is upper bounded by $d$. For any graph $G$ we define $I^G_{d} = \{ (v, v') ~|~ dist_G(v, v') \le 
    d \}$. In addition, we also define the class $\mathcal{G}(\hat{\Delta}(n), \hat{\delta}(n))$ for functions $\hat{\delta}:\mathbb{N} \rightarrow \mathbb{N}$, $\hat{\Delta}: \mathbb{N} \rightarrow \mathbb{N}$ as the set of graphs $G = (V, E)$ such that $\delta_G \geq \hat{\delta}(|V|)$ and $\Delta_G \leq \hat{\Delta}(|V|)$ hold.
    The \emph{$(\hat{\Delta}(n), \hat{\delta}(n), d)$-rendezvous problem} is defined as that for the instance 
    class $\mathcal{I}_{d} = \{ (G, I^G_{d}) \mid G \in \mathcal{G}(\hat{\Delta}(n), \hat{\delta}(n))  \}$. In particular, we focus on the instance class $\mathcal{I}_{1}$ in Section \ref{sec:algorithm} and \ref{sec:discussion}. In Section \ref{sec:impossibility} we show the lower bounds on the problem for $\mathcal{I}_2$.

    \section{Rendezvous Algorithm} 
    \label{sec:algorithm}
    \subsection{Algorithm Overview}

    In this section, we present an overview of our rendezvous algorithm. For ease of presentation, we assume that each agent has the precise values of $\delta$ and $\log n$, but it is not essential. Those values can be replaced with their constant-factor approximate values without increasing the asymptotic running time.
    A constant factor approximation
    of $\log n$ can be estimated from the upper bound $n'$ of vertex IDs. The approximation of $\delta$ can be obtained by standard doubling estimation, explained in Section \ref{sec:discussion}.

    
    First, we introduce several definitions and terminologies used in the following argument.
    
    \begin{definition}[$\alpha$-heaviness, $\alpha$-lightness]
        For any $T \subseteq V$, $v \in V$, and $\alpha \in \mathbb{R}^+$, 
        $v$ is called \emph{$\alpha$-heavy} for $T$ if $|T \cap N^+(v)| \ge \alpha$ holds~\footnote{$\mathbb{R}^+$ is the set of all positive real values.}.
        Similarly we say that $v$ is \emph{$\alpha$-light} for $T$ if 
        $|T \cap N^+(v)| < \alpha$ holds.
    \end{definition}
    
    The following proposition is a trivial fact deduced from the definition above.
    
    \begin{proposition}
        \label{prop:heaviness}
        Let $v \in V$ be an $\alpha$-heavy vertex for $T \subseteq V$. 
        For any $T'$ such that $T' \supseteq T$ holds, $v$ is also $\alpha$-heavy for $T'$.
    \end{proposition}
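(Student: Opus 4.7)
The plan is to derive the conclusion directly from the definition of $\alpha$-heaviness combined with the monotonicity of set intersection with respect to the first argument. The statement is essentially a monotonicity observation, so no probabilistic or structural graph argument is needed; the proof reduces to a one-line set-cardinality inequality.

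First, I would unfold the hypothesis: since $v$ is $\alpha$-heavy for $T$, the definition gives $|T \cap N^+(v)| \ge \alpha$. Next, I would use the assumption $T' \supseteq T$ to observe that intersecting both sides with the fixed set $N^+(v)$ preserves containment, yielding $T' \cap N^+(v) \supseteq T \cap N^+(v)$. Taking cardinalities on both sides and chaining with the hypothesis, I obtain
\[
|T' \cap N^+(v)| \;\ge\; |T \cap N^+(v)| \;\ge\; \alpha,
\]
which is precisely the condition for $v$ to be $\alpha$-heavy for $T'$.

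There is no substantive obstacle here: the only thing to be careful about is the direction of the containment (the larger set $T'$ yields a possibly larger intersection, hence a possibly larger count), and that $N^+(v)$ does not depend on $T$ or $T'$, so it remains fixed throughout the chain of inequalities. Accordingly, I would present the proof as a two or three line derivation immediately following the statement, with no case analysis and no auxiliary lemma.
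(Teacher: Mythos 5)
Your proof is correct and matches the argument the paper intends: the paper states this proposition without proof, calling it a trivial consequence of the definition, and the implicit reasoning is exactly your monotonicity chain $|T' \cap N^+(v)| \ge |T \cap N^+(v)| \ge \alpha$. Nothing is missing or different in substance.
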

    
    Given a vertex set $T \subseteq V$ and $\alpha \in \mathbb{R}^+$, we define 
    $H_{\alpha} (T), L_{\alpha}(T) \subseteq V$ as the sets of vertices that are respectively $\alpha$-heavy and $\alpha$-light for $T$.
    
    \begin{definition}[$(z, \alpha, \beta)$-dense condition]
        Given $z \in \{ a, b \}$, $T \subseteq V$, and $\alpha, \beta \in \mathbb{R}^+$, $T$ is called \emph{$(z, \alpha, \beta)$-dense} if the
        following three conditions hold:
        \begin{itemize}
            \item $v_0^z \in T$,
            \item for any $w \in T$, $dist_G (v_0^z, w) \le \beta$, and
            \item $N^+(v_0^z) \subseteq H_{\alpha}(T)$.
        \end{itemize}
    \end{definition}
    
    The main idea of our rendezvous algorithm is that agent $a$ constructs an 
    $(a, \delta/8, 2)$-dense vertex set $T^a$. Since $v^b_0 \in N^+(v_0^a)
    \subseteq H_{\delta/8}(T^a)$, $v_0^b$ is an $(\delta / 8)$-heavy vertex for $T^a$.
    Then a sublinear number of random vertex samplings from $T^a$ by agent $a$ and 
    those from $N(v_0^b)$ by $b$ ensure that a vertex is commonly sampled with high probability. 
    In this sampling process, agent $b$ leaves the ID of $v^b_0$ 
    at the whiteboards of all the sampled vertices. When agent $a$ visits the common sample,
    it knows the initial location of $v^{b}_0$. Then agent $a$ 
    moves to $v^b_{0}$ and meets $b$.
    
    In the following argument, we divide our algorithm into two sub-algorithms. The
    first one, called \textsf{Main-Rendezvous}, achieves rendezvous provided that agent $a$ knows an $(a, \delta/8, 2)$-dense set $T^a \subseteq N^+(N^+(v^a_0))$.
    The second sub-algorithm 
    is for agent $a$ to construct such an $(a, \delta/8, 2)$-dense set 
    $T^a$, which is called \textsf{Construct}. The combination
    of these two sub-algorithms yields the algorithm we claim. 
    
    \subsection{Rendezvous with $T^a$}
    \label{sec:dense-rendezvous}
    We present the algorithm \textsf{Main-Rendezvous}, which solves 
    the rendezvous problem using the initial knowledge of 
    an $(a, \delta/ 8, 2)$-dense set $T^a \subseteq N^+(N^+(v^a_0))$ by agent $a$. Here
    the ``knowledge'' implies that (1) $a$ has the list of all vertices in $T^a$ 
    in its memory, 
    and (2) also has the shortest paths to all vertices in $T^a$ from $a$'s initial location\footnote{Since the length of these shortest paths are at most two by the definition of 
    $(a, \delta/8, 2)$-dense sets, the space for storing this information 
    is asymptotically same as the space for the list of vertices. }.
    The pseudocode of \textsf{Main-Rendezvous} is presented in Algorithm~\ref{alg:rendezvous}.
    First, agent $a$ samples a vertex 
    $v$ in $T^a$ uniformly at random, and visits there. At vertex $v$, $a$ 
    checks if $b$ has written the ID $v^b_0$ in the whiteboard of $v$. If so, then
    $a$ moves to $v^b_0$ and halts. The agent $b$ iteratively visits a vertex $u$ 
    in $N^+(v^b_0)$ chosen uniformly at random, and writes down the ID of $v^b_0$ 
    into the whiteboard of $u$. If it meets $a$ at vertex $v^b_0$, then the algorithm terminates.
    We present the following lemma for the correctness of \textsf{Main-Rendezvous}.

    \begin{algorithm}[t]
        \caption{\textsf{Main-Rendezvous} : Rendezvous with $T^a$}
        $w(v)$ : whiteboard at vertex $v$. Initially $w(v) = \perp$ for all $v \in V$ \\
        $q_a, q_b$ : local variables of agents $a$ and $b$ \\
        \textbf{Operations of Agent $a$}
        \begin{algorithmic}[1]
            \label{alg:rendezvous}
            \setlength{\algorithmicindent}{15pt}
            \STATE construct $T^a$ satisfying $(a, \delta/8, 2)$-dense condition
            \REPEAT
                \STATE choose $v$ in $T^a$ uniformly at random, and move to $v$
                \STATE $q_a \leftarrow w(v)$
                \STATE return to $v^a_0$
            \UNTIL{$q_a \neq \perp$}
            \STATE visit $q_a$ and halt
        \end{algorithmic}
        \textbf{Operations of agent $b$}
        \begin{algorithmic}[1]
            \setlength{\algorithmicindent}{15pt}
            \REPEAT
                \STATE move to $v \in N^+(v_0^b)$ chosen uniformly at random
                \STATE $w(v) \leftarrow v_0^b$
                \STATE return to $v^b_0$
            \UNTIL{achieve rendezvous}
        \end{algorithmic}
    \end{algorithm}
    
    \begin{lemma}
        \label{lemma:dense-rendezvous}
        Let $G = (V, E)$ be any graph such that $\delta_G \ge \sqrt{n}$ holds. Suppose that agent $a$ constructs an 
        $(a, \delta / 8, 2)$-dense set $T^a$ in $t_a$ rounds. Then, Algorithm \textsf{Main-Rendezvous} completes rendezvous within $t_a + O\left(\sqrt{\frac{n \Delta}{\delta}} \log n\right)$ rounds with high probability.
    \end{lemma}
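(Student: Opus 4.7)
The key structural observation is that since $v_0^a$ and $v_0^b$ are adjacent, $v_0^b \in N^+(v_0^a) \subseteq H_{\delta/8}(T^a)$ by the $(a,\delta/8,2)$-dense condition, so the overlap $S := T^a \cap N^+(v_0^b)$ satisfies $|S| \geq \delta/8$. The algorithm then reduces to a two-sided birthday-style sampling: agent $a$ picks uniformly from $T^a$ (of size at most $n$), agent $b$ picks uniformly from $N^+(v_0^b)$ (of size at most $\Delta+1$), and rendezvous happens the first time $a$ visits a vertex in $S$ that $b$ has already marked. Each iteration of $a$'s loop takes at most $4$ rounds (two hops to reach a vertex of $T^a \subseteq N^+(N^+(v_0^a))$ and two to return) and each iteration of $b$'s loop takes exactly $2$ rounds, so in a window of $\Theta(K)$ rounds both agents perform $\Omega(K)$ independent uniform samples.

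I would then fix $K = C\sqrt{n\Delta/\delta}\log n$ for a large constant $C$ and analyze the system after $\Theta(K)$ rounds. By the moment $a$ begins its $(K/2)$-th iteration, $b$ has already completed at least $K/2$ picks; let $W$ be the set of vertices $b$ has written to by that point. The heart of the argument is a case split on whether $K$ exceeds $|N^+(v_0^b)|$. If $K \gtrsim \Delta$, a coupon-collector calculation yields $\expect{|W \cap S|} = \Omega(|S|) = \Omega(\delta)$; if $K \lesssim \Delta$, the linear regime gives $\expect{|W \cap S|} = \Omega(K|S|/\Delta) = \Omega(K\delta/\Delta)$. The assumption $\delta \geq \sqrt{n}$ makes this expectation $\omega(\log n)$ in both regimes, so a Chernoff bound (applied to the negatively correlated indicators $\mathbf{1}[v \in W]$ for $v \in S$) yields $|W \cap S| \geq \expect{|W \cap S|}/2$ with high probability.

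Conditioning on $W$, each of $a$'s subsequent $K/2$ picks hits $W \cap S$ independently with probability at least $|W \cap S|/|T^a| \geq |W \cap S|/n$, so the probability that every one of them misses is at most $\exp(-\Omega(K \cdot |W \cap S|/n))$. Substituting the two cases, the exponent becomes $\Omega(K\delta/n) = \Omega(C\log n \cdot \sqrt{\delta\Delta/n}) \geq \Omega(C\log n)$ in the large-$K$ regime (using $\delta\Delta \geq \delta^2 \geq n$), and $\Omega(K^2\delta/(n\Delta)) = \Omega(C^2\log^2 n)$ in the small-$K$ regime. In both cases the failure probability is $n^{-\Omega(C)}$, and choosing $C$ large enough gives rendezvous within $t_a + O(\sqrt{n\Delta/\delta}\log n)$ rounds with high probability.

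The main technical obstacle is that a direct birthday-paradox expectation $\Theta(K^2|S|/(|T^a|\cdot|N^+(v_0^b)|)) = \Omega(\log^2 n)$ only yields constant-probability rendezvous via a second-moment argument. Upgrading this to high probability requires the two-phase ``writes first, reads second'' decomposition above, together with the case split at the coupon-collector threshold $K \sim |N^+(v_0^b)|$; the hypothesis $\delta \geq \sqrt{n}$ is precisely what enables both branches of that split to yield $n^{-\Omega(1)}$ failure at the same choice $K = \Theta(\sqrt{n\Delta/\delta}\log n)$.
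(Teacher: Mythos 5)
Your proposal is correct, and it shares the paper's structural skeleton --- both use $v_0^b \in N^+(v_0^a) \subseteq H_{\delta/8}(T^a)$ to get $|T^a \cap N^+(v_0^b)| \ge \delta/8$, and both decompose the execution into a marking phase for $b$ followed by a hitting phase for $a$ --- but the probabilistic bookkeeping is genuinely different. The paper fixes a target \emph{count} $h = \lfloor (1/16)\sqrt{n\delta/\Delta} \rfloor$ of informed vertices (the balance point between the two phases), writes the time to reach $h$ as a sum of geometric random variables, and upgrades the $O(\sqrt{n\Delta/\delta})$ expectation to a high-probability bound via Markov's inequality repeated over $\Theta(\log n)$ windows; there the hypothesis $\delta \ge \sqrt{n}$ serves only to guarantee $h \le \delta/16 < |N^+(v_0^b) \cap T^a|$, so $b$'s per-pick success probability stays at least $(\delta/16)/(\Delta+1)$ throughout and no coupon-collector saturation ever arises --- this is how the paper avoids your case split at $K \sim \Delta$. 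You instead fix the time \emph{budget} $K = \Theta(\sqrt{n\Delta/\delta}\log n)$, lower-bound $\mathbb{E}[|W \cap S|]$ with a split between the linear and saturated regimes, and concentrate via Chernoff for the negatively associated occupancy indicators, deploying $\delta \ge \sqrt{n}$ twice: to make the expectation $\omega(\log n)$ and to make the final miss exponent $\Omega(\log n)$ (via $\delta\Delta \ge \delta^2 \ge n$). Your route costs the case split and an appeal to negative association, but it buys a self-contained high-probability argument: the paper's jump from ``probability $1/2$ at $t_a + 4\sqrt{n\Delta/\delta}$'' to ``probability $1/n^2$ at $t_a + 8\sqrt{n\Delta/\delta}\log n$'' implicitly restarts Markov over $2\log n$ fresh windows and is left unjustified, whereas your tail bounds are explicit. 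Your closing observation that a direct birthday-paradox second-moment argument yields only constant success probability is also accurate, and explains why both proofs need some writes-first, reads-second decomposition.
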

    \begin{proof}
        We say that a vertex $v \in N^+(v^b_0) \cap T^a$ is \emph{informed} at round $t$ if 
        $w(v) = v^b_0$ at $t$, and define $Z_t \subseteq N^+(v^b_0) \cap T^a$ as 
        the set of all informed vertices at $t$. 
        Let $h = \lfloor (1/16)\sqrt{n \delta / \Delta} \rfloor $ for short. We first show that $|Z_t| \geq h$ 
        holds for $t \geq t_a +  8\sqrt{n\Delta/\delta}\log n$.  
        Let $t_i$ be the first time that $Z_{t_i} \geq i$ holds, and $X_i$ be $X_i = t_i - t_{i - 1}$ 
        ($1 \leq i \leq h$). 
        By the assumption of $\delta > \sqrt{n}$, we have the following inequality.
        \[ 
            h = \left\lfloor \frac{1}{16} \sqrt{\frac{n \delta}{\Delta}} \right\rfloor 
            \le \frac{1}{16} \sqrt{ n } \le \frac{1}{16} \delta < |N^+(v_0^b) \cap T^a|.
        \] 
        For any $1 \leq i \leq h $, the variable $X_i$ follows the geometric distribution with success 
        probability $p_i = (|N^+(v_0^b) \cap T^a| -i +1)/|N^+(v_0^b)|$.
        Then we have
        \begin{align*}
        \expect{X_i} &= \frac{|N^+(v_0^b) |}{|N^+(v_0^b) \cap T_a| -i + 1} \\
        &\leq \frac{|N^+(v_0^b)|}{|N^+(v_0^b) \cap T^a| - h + 1}. 
        \end{align*}
        This deduces the following bound.
        \begin{align*}  
            \mathrm{E} \left[ t_h \right] = t_a + \mathrm{E} \left[ \sum_{i=2}^{\lfloor h \rfloor} X_i \right] &\le t_a + \sum_{i=2}^{h} \frac{|N^+(v_0^b)|}{|N^+(v_0^b) \cap T^a| - h}  \\
            &\le t_a + h \frac{(\Delta + 1)}{\delta / 16} \\ 
            &\le t_a + \left\lfloor \frac{1}{16} \sqrt{\frac{n \delta}{\Delta}} \right\rfloor  \frac{16 (\Delta + 1)}{\delta}  \\
            &\le t_a + 2 \sqrt{\frac{n \Delta}{\delta}}.
        \end{align*}
        By Markov's inequality, the probability of $|Z_t| < h$ for $t = t_a + 4 \sqrt{n \Delta / \delta}$ is 
        at most $1/2$. Thus the probability of $|Z_t| < h$ for $t = t_a + 8 \sqrt{n \Delta / \delta}\log n$
        is at most $1/n^2$.
        
        Assume that $|Z_t| \ge h$ holds for $t = t_a +  8\sqrt{n\Delta/\delta}\log n$. At $t$ or later, the probability that 
        agent $a$ visits an informed vertex is at least $h / |T^a|$. Bounding the tail bound using Markov's inequality, we can conclude that agent $a$ visits at least one informed vertex by the time $t_a + O\left( \sqrt{\frac{n \Delta}{\delta}} \log n \right)$ with probability $1 - 1/n^2$ or more. That is, two agents
        meet within  $t_a + O\left( \sqrt{\frac{n \Delta}{\delta}} \log n \right)$ rounds with probability at least 
        $1 - O(1/n^2)$. Hence, the lemma is proven. 
    \end{proof}
    
    \subsection{Construction of $T^a$}
    \label{sec:constructT}
    In what follows, we simply say that a vertex is heavy or light if it is 
    $\delta/8$-heavy or $\delta / 2$-light respectively.
    By Lemma~\ref{lemma:dense-rendezvous}, it  suffices that agent $a$ constructs
    a $(a, \delta / 8, 2)$-dense set $T^a$ to achieve rendezvous. The algorithm 
    \textsf{Construct} takes the role of constructing $T^a$, which utilizes a subroutine 
    called \textsf{Sample}. The pseudocode of \textsf{Sample} and \textsf{Construct}  
    are presented in Algorithm \ref{alg:sample} and \ref{alg:construct} respectively.
    In algorithm \textsf{Construct}, agent $a$ manages a set $S^a \subseteq 
    N^+ (v^a_0)$, and iteratively adds a vertex to 
    $S^a$. In the following argument, we refer to the process of adding the 
    $i$-th vertex to $S^a$ as the \emph{$i$-th iteration}. Eventually, the 
    algorithm outputs $N^+(S^a)$ as the constructed set $T^a$ when it satisfies the termination condition (which is explained later). 
    Let $S^a_i$ be the set stored in $S^a$ at the beginning of the $i$-th iteration, 
    and $x_i$ be the vertex added in the 
    $i$-th iteration. The principle of choosing $x_i$ is very simple: Agent $a$ selects a vertex $x_{i}$ such that the volume of 
    $N^+(x_i) \setminus N^+(S_i^a)$ is large. Specifically, it searches a vertex 
    $w \in N^+(v_0^a)$ that is light for $N^+(S_i^a)$. If such 
    a vertex exists, it is added to $S_i^a$ as $x_{i}$. Otherwise, any vertex in 
    $N^+(v^a_0)$ is heavy for $N^+(S^a_i)$, i.e., $N^+(v^a_0) \subseteq 
    H_{\delta /8 }(N^+(S^a_i))$.
    This implies that $N^+ ( S^a_i )$ satisfies 
    $(a, \delta/8, 2)$-dense condition, and the algorithm can return it as $T^a$.
    Adding a light vertex to $S^a_i$ increases the cardinality of $N^+(S^a)$ by at least $\Theta(\delta)$, 
    and thus the algorithm 
    \textsf{Construct} obviously terminates within $O(n/\delta)$ iterations (because 
    if $N^+(S^a) = V$ holds, any vertex becomes heavy for $N^+(S^a)$).
    
    For expanding $S^a_i$ by adding a light vertex, the 
    algorithm has to check the heaviness of each vertex in $N^+(v_0^a)$ 
    (for $N^+(S_i^a)$). The algorithm \textsf{Sample} takes this role. More 
    precisely, the run of 
    \textsf{Sample}$(\Gamma, \alpha)$ probabilistically checks 
    whether or not each vertex in $N^+(v_0^a)$ is $\alpha$-heavy for $\Gamma$  within $O(|\Gamma|/\alpha)$
    rounds. The algorithm outputs the vertex set consisting of the vertices concluded as $\alpha$-heavy for $\Gamma$. A straightforward approach of identifying $x_{i}$ in the 
    construction of $T^a$ is to run \textsf{Sample}$(N^+(S_i^a), \delta / 8)$ in 
    every iteration. However, then the total running time of \textsf{Construct}
    becomes $O( (n / \delta)^2)$ rounds. To save time, our algorithm 
    finds a light vertex $x_i$ using the following two-step strategy:
    
    \begin{itemize}
        \item (Step 1) \textbf{Optimistic decision}: In the $i$-th iteration,
        agent $a$ runs \textsf{Sample}$(\Gamma, \delta / 8)$
        for $\Gamma = N^+(S_i^a) \setminus N^+(S_{i-1}^a)$.
        If it detects that a vertex $u \in N^+(v_0^a)$ is heavy for $\Gamma$, 
        Proposition~\ref{prop:heaviness} guarantees that $u$ is heavy for 
        $N^+(S_i^a) \supseteq \Gamma$. On the other hand, vertex $u$ can be heavy for $\Gamma$ even if 
        the algorithm
        says that $u$ is light. Then adding a vertex $u$ as $x_{i}$ prevents the algorithm 
        from working correctly as intended. 
        \item (Step 2) \textbf{Strict decision}: To resolve the matter of step 1, 
        agent $a$ checks if the candidates of $x_{i}$ are actually light for 
        $N^+(S_i^a)$. More precisely, the agent samples $\Theta(\log n)$ vertices 
        uniformly at random from the set output by the run of 
        \textsf{Sample}$(\Gamma, \delta / 8)$, and then it checks the heaviness of each sample $v$ by actually visiting there and computing $|N^+(S_i^a) \cap N^+(v)|$. If the agent finds a light vertex from the $\Theta(\log n)$ samples, that vertex is selected as $x_{i}$. Otherwise, it finds that a constant fraction of whole candidates for $x_{i}$ in the optimistic decision is heavy for $N^+(S_i^a)$ with high probability. Then the agent runs \textsf{Sample}$(N^+(S_i^a), \delta / 8)$ for strict checking.
        If a vertex $u$ is found light for $N^+(S_i^a)$, the agent selects $u$ as $x_{i}$. Otherwise, the algorithm 
        terminates.
    \end{itemize}
    
    In the following argument, we refer to the runs of \textsf{Sample} in step 1 and 2 as \emph{optimistic/strict runs} of \textsf{Sample}$(\Gamma, \alpha)$ respectively. Since the running time of each optimistic run depends on the size of 
    difference set $N^+(S_i^a) \setminus N^+(S_{i-1}^a)$, the total sum of the 
    running time incurred by optimistic runs is $O((n \log n ) / \delta)$. While 
    each strict run of \textsf{Sample} needs at most $O((n \log n) / \delta)$ rounds,
    we can show that strict runs are executed at most $O(\log n)$ times. 
    It comes from the two facts that 1) one strict run corrects the identification of a constant fraction of heavy vertices in $N^+(S_i^a)$ which are wrongly identified as light ones, and 2) a vertex identified as a heavy one is never identified as light. Consequently the total running time of \textsf{Construct} is bounded by
    $O(n \log^2 n / \delta)$ steps.
    We explain the details of \textsf{Sample}$(\Gamma, \alpha)$ and \textsf{Construct} 
    in the following paragraphs.
    
    \subsubsection{\textsf{Sample}$(\Gamma, \alpha)$}
    For the decision of lightness/heaviness of each vertex in 
    $N^+(v_0^a)$ for $\Gamma$, this algorithm conducts random samplings and visits. The agent uses an 
    array $C \subseteq \mathbb{Z}^{|N^+(v_0^a)|}$, which counts for each $u \in N^+(v_0^a)$ the number of visited vertices having $u$ as a neighbor. The initial value of $C[u]$ for each $u \in N^+(v_0^a)$ is $C[u] = 0$. Let $l$ be a threshold value $l = \lceil 150 \ln n \rceil$. 
    In the run of \textsf{Sample}$(\Gamma, \alpha)$, the agent repeatedly visits a vertex 
    $v$ in $\Gamma$ chosen uniformly at random (with duplication) $96 \lceil |\Gamma| ( \ln n )/ \alpha \rceil$ times. At the visited vertex $v$, it increments $C[u]$ for 
    each vertex $u$ in $N^+(v_0^a) \cap N^+(v)$ (for this process, the agent carries the information of $N^+(v_0^a)$). After processing all samples, 
    the agent concludes that $u$ is heavy for $\Gamma$ if $C[u] \ge l$ holds, or 
    light otherwise. The algorithm outputs the vertex set $H'$ consisting of the vertices concluded as a heavy one. 
    
    \subsubsection{\textsf{Construct}}
    In this algorithm agent $a$ has the following sets as its internal variables:
    $S_i^a$, $R_i$, $H_i$, and $\mathit{NS}_i^a$. The subscript $i$ corresponds to the number of 
    iterations in the algorithm. The set $R_i$ is a set of candidates for $x_{i}$. The set $H_i$ stores the vertices that turned out to be $(\delta / 8)$-heavy for $N^+(S_i^a)$ at the $i$-th iteration. The variable $\mathit{NS}_i^a$ keeps track of the set $N^+(S^a_i)$. The initial value of these sets are $S_1^a = \{ v_0^a \}$, $R_1 = N^+(v_0^a)$, $H_1 = \emptyset$, and $\mathit{NS}_i^a = N^+(v_0^a)$ respectively. The agent $a$ iterates the following operations until $R_i = \emptyset$. First, the agent executes the optimistic run of \textsf{Sample}$(N^+(S_i^a) \setminus N^+(S_{i-1}^a), \delta / 8)$, and for the returned set $H'$ it updates $H_{i}$ and $R_{i}$ with $H_{i+1} \leftarrow H_i \cup H'$ and $R_i \leftarrow N^+(v_0^a) \setminus H_{i+1}$. Based on the updated set $R_{i+1}$, the agent randomly chooses $\lceil 4 \log n \rceil$ vertices from 
    $R_{i+1}$ and visits each sampled vertex. If a visited vertex is actually light for $N^+(S_i^a)$ (this is checked by using the information of $\mathit{NS}_i^a$), then the agent adds it to $S_i^a$ as $x_i$. Otherwise, (i.e., all of the vertices are heavy for $N^+(v_0^a)$), then the agent executes the strict run of \textsf{Sample}$(N^+(S_i^a), \delta / 8)$ and updates the set $H_{i+1}$ and $R_{i+1}$ in the same way as the optimistic run. After that, the agent selects any vertex in $R_{i+1}$ and adds it to $S_i^a$.   
    
    \begin{algorithm}[t]
        \caption{\textsf{Sample}$(\Gamma,  \alpha)$}
        $l$: threshold value $l = \lceil 150 \ln n \rceil$
        \begin{algorithmic}[1]
            \label{alg:sample}
            \setlength{\algorithmicindent}{15pt}
            \FOR{$i=1$ to $96 \left\lceil \frac{ |\Gamma|  \ln n }{\alpha} \right\rceil$}
                \STATE choose a vertex $v$ in $\Gamma$ uniformly at random \STATE visit $v$
                \FORALL{$u \in N^+(v) \cap N^+(v_0^a)$}
                    \STATE $C[u]++$
                \ENDFOR
            \ENDFOR
            \FORALL{$u \in N^+(v_0^z)$}
                \IF{$C[u] \ge l$}
                    \STATE $H' \leftarrow H' \cup \{ u \}$
                \ENDIF
            \ENDFOR
            \STATE return $H'$
        \end{algorithmic}
    \end{algorithm}
    
    \begin{algorithm}[t]
        \caption{\textsf{Construct}} 
    \begin{algorithmic}[1]
        \label{alg:construct}
        \setlength{\algorithmicindent}{15pt}
        \WHILE{$R_i \neq \emptyset$}
        \STATE $H' \leftarrow \textsf{Sample}(N^+(S_i^a) \setminus N^+(S_{i-1}^a), \delta/8)$;
        \STATE $H_{i+1} \leftarrow H_i \cup H'$;\STATE $R_{i+1} \leftarrow N^+(v_0^a) \setminus H_{i+1}$;
        \IF{$R_{i+1} \neq \emptyset$}
        \FOR{$j = 1$ to $\lceil 4 \log n \rceil$}
            \STATE choose $u_j \in R_{i+1}$ uniformly at random;
            \STATE visit $u_j$; 
            \STATE compute $|N^+(S^a_i) \cap N^+(u_j)|$ using $\mathit{NS}_i^a$;
            \IF{$u_j$ is $\delta / 2$-light for $N^+(S^a_i)$}
                \STATE $x_i \leftarrow u_j$
                \STATE $S_{i+1}^a \leftarrow S_i^a \cup \{ x_{i} \}$; 
                \STATE $R_{i+1} \leftarrow R_{i+1} \setminus \{ x_i \}$; 
                \STATE break;
            \ENDIF
        \ENDFOR
        \IF{Each $u_j$ is $\delta / 2$-heavy for $N^+(S_i^a)$}
            \STATE $H'  \leftarrow \textsf{Sample}(N^+(S^a_i), \delta / 8)$;
            \STATE $H_{i+1} \leftarrow H_{i+1} \cup H'$; 
            \STATE $R_{i+1} \leftarrow N(v_0^a) \setminus H_{i+1}$;
            \IF{$R_{i+1} \neq \emptyset$}
            \STATE choose any vertex $x_{i} \in R_{i+1}$;
            \STATE $S_{i+1}^a \leftarrow S_{i}^a \cup \{ x_{i} \}$; 
            \STATE $\mathit{NS}_i^a \leftarrow \mathit{NS}_i^a \cup N^+(x_i)$; 
            \STATE $R_{i+1} \leftarrow R_{i+1} \setminus \{ x_{i} \}$;  
            \ENDIF
        \ENDIF
        \ENDIF
        \STATE $i \leftarrow i+1$
        \ENDWHILE
        \STATE return $N^+(S_i^a)$
    \end{algorithmic}    
    \end{algorithm}
    
    \subsection{Correctness Proof of Algorithm \textsf{Sample}$(\Gamma, \alpha)$}
    Lemma \ref{lma:detect-heavy} below shows that the algorithm 
    \textsf{Sample}$(\Gamma, \alpha)$ probabilistically checks if a vertex $u \in N^+(v_0^a)$ is approximately heavy or light 
    for $\Gamma$. 
    
    \begin{lemma}
        \label{lma:detect-heavy}
        Let $\alpha > 0$ and $\Gamma \subseteq N^+(v_0^a)$ satisfy $|
        \Gamma| \ge \alpha$. The following statements 
        hold for any $u \in N^+(v_0^a)$ and the 
        output set $H'$ of \textsf{Sample}$(\Gamma, \alpha)$ 
        with probability at least $1 - 1/n^8$:
        \begin{enumerate} 
        \item If $u \in H'$ then $u$ is $\alpha$-heavy 
        for $\Gamma$. 
        \item if $u \in N^+(v_0^a) 
        \setminus H'$ then $u$ is $4 \alpha$-light for 
        $\Gamma$.
        \end{enumerate}
    \end{lemma}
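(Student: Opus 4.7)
The plan is to view the counter $C[u]$ produced by \textsf{Sample}$(\Gamma,\alpha)$ as a sum of independent Bernoulli trials and then apply standard multiplicative Chernoff bounds, once in the upper-tail direction for statement (1) and once in the lower-tail direction for statement (2). Fix $u \in N^+(v_0^a)$ and let $k = 96 \lceil |\Gamma| \ln n / \alpha \rceil$ be the total number of samples. Write $C[u] = \sum_{i=1}^{k} X_i$, where $X_i$ is the indicator that the $i$-th sampled vertex lies in $N^+(u) \cap \Gamma$. Since the samples are drawn independently and uniformly at random from $\Gamma$, the $X_i$'s are i.i.d.\ Bernoulli with success probability $p_u = |N^+(u)\cap\Gamma|/|\Gamma|$, and hence $\mathbb{E}[C[u]] = k p_u = 96 \lceil |\Gamma| \ln n / \alpha \rceil \cdot |N^+(u)\cap\Gamma|/|\Gamma|$.

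For statement (1) I would take the contrapositive: assume $u$ is $\alpha$-light, i.e.\ $|N^+(u)\cap\Gamma| < \alpha$. Then $\mathbb{E}[C[u]] < k\alpha/|\Gamma|$, which equals $96 \ln n$ plus an $O(1)$ slack coming from the ceiling (here I would use the hypothesis $|\Gamma| \ge \alpha$ to bound $\alpha/|\Gamma| \le 1$). The event $u \in H'$ requires $C[u] \ge l = \lceil 150 \ln n \rceil$, which is a constant factor larger than the mean. Applying $\Pr[X \ge (1+\delta)\mu] \le e^{-\delta^2 \mu/3}$ with $\mu \le 96 \ln n$ (and $l/\mu \gtrsim 150/96$, giving $\delta \gtrsim 0.56$) yields a failure probability at most $1/n^8$. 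The worst case really is $|N^+(u)\cap\Gamma|$ just below $\alpha$: for smaller values the mean drops and the upper tail only becomes sharper (formally by a coupling argument).

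For statement (2) I would again argue by contrapositive: assume $u$ is not $4\alpha$-light, so $|N^+(u)\cap\Gamma| \ge 4\alpha$. Then $\mathbb{E}[C[u]] \ge 4k\alpha/|\Gamma| \ge 384 \ln n$. The event $u \notin H'$ requires $C[u] < l \le 150 \ln n$, i.e.\ the count must fall below roughly $150/384 \approx 0.39$ of its mean. Applying $\Pr[X \le (1-\delta)\mu] \le e^{-\delta^2 \mu/2}$ with $\mu \ge 384 \ln n$ and $\delta \approx 0.61$ gives a failure probability bounded by $1/n^8$ with considerable room to spare; again this is the worst case since larger $\mathbb{E}[C[u]]$ only sharpens the lower tail.

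The only point requiring any care is the arithmetic of constants: one must verify that the gap between $96\ln n$ and $150\ln n$ in case (1), and between $150\ln n$ and $384\ln n$ in case (2), is wide enough for the Chernoff exponent to reach $-8\ln n$ even after accounting for the additive slack introduced by the ceiling in $k$. No nontrivial structural argument is needed beyond independence of the samples, so I do not expect a genuine obstacle.
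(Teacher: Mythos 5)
Your proposal is correct and follows essentially the same route as the paper's own proof: both view $C[u]$ as a binomial count over the $96\lceil|\Gamma|\ln n/\alpha\rceil$ uniform samples, prove the contrapositives ($\alpha$-light $\Rightarrow C[u]<l$ and $4\alpha$-heavy $\Rightarrow C[u]\ge l$ with probability $1-1/n^8$), and close the constant gaps $96\ln n$ versus $l=\lceil 150\ln n\rceil$ versus $384\ln n$ via multiplicative Chernoff bounds, with the ceiling slack absorbed as you indicate. Your explicit remarks on the worst-case coupling and the arithmetic of the constants are, if anything, slightly more careful than the paper's write-up (which has a typo in its lower-tail display), so there is nothing to fix.
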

    
    \begin{proof}
        We prove that 1) if $u \in N^+(v_0^a)$ is $\alpha$-light for $\Gamma$, then 
        after the execution of the algorithm,  $C[u] < 
        l$ holds with high probability., and 2) if the vertex $u$ is $4 
        \alpha$-heavy then $C[u] \ge l$ with high probability. This trivially implies the lemma.
        Consider the proof of the first statement.  
        Suppose that $u$ is $\alpha$-light for $\Gamma$. 
        Then we have $|N^+(u) \cap \Gamma| < \alpha$. 
        Let $X_1$ be the random variable corresponding to the value stored in $C[u]$ 
        after the execution of \textsf{Sample}$(\Gamma, \alpha)$. Since $X_1$ follows the binomial distribution $B(m, p)$ with parameter $p = |N^+(u) \cap \Gamma| / |\Gamma| < \alpha /|\Gamma|$ and $m = 96 \lceil ( |\Gamma| \ln n ) / \alpha \rceil$, $\expect{X_1} \le 96 \lceil ( |\Gamma| \ln n ) / \alpha \rceil \cdot \alpha / |\Gamma| \le (96 \ln n) + 1$ holds. Let $\mu_1 = (96 \ln n) + 1$ for short. 
        Using Chernoff bound, we have 
        \[
            \Pr[X_1 \ge l] \le \Pr[X_1 \ge (1 + 1/2) \mu_1] \le \mathrm{e}^{- \mu_1 / (3 \cdot 2^2)} \le 1/n^{8}.    
        \]
    
        We next consider the second statement. Suppose that $u$ is $4 \alpha$-heavy for $\Gamma$. Then we have $|N^+(u) \cap \Gamma| \ge 4 \alpha$. Similarly, with the first proof, we define the random variable $X_2$
        corresponding the value of $C[u]$ after the execution of the algorithm.
        Since it follows the binomial distribution $B(m, p)$ with the same parameter as the first proof, we have $\expect{X} \ge 96 \lceil ( |\Gamma| \ln n ) / \alpha \rceil \cdot ( 4 \alpha / |\Gamma|) \ge 96 ( ( |\Gamma| \ln n ) / \alpha  ) \cdot ( 4 \alpha / |\Gamma| ) \ge 384 \ln n$. Letting $\mu_2 = 384 \ln n$, Chernoff bound provides the following inequality.
        \[
            \Pr[X_2 \le l] \le \Pr [(1 + 1/2)\mu_2] \le \mathrm{e}^{-\mu_2 / (3 \cdot 2^2)} \le 1/n^{8}.  
        \]
        Thus, the lemma is proven.
    \end{proof}
    
    The next corollary immediately implies the correctness of algorithm 
    \textsf{Sample}$(\Gamma, \alpha)$, which is obtained by Lemma~\ref{crl:detect-heavy} and the standard union-bound argument.
    
    \begin{corollary}
        \label{crl:detect-heavy}
        Consider any call of \textsf{Sample}$(\Gamma, \alpha)$. If $|\Gamma| \ge \alpha$, then $H' \subseteq H_{\alpha}(\Gamma)$ and $N^+(v_0^a) \setminus H' \subseteq L_{4 \alpha}(\Gamma)$ hold with probability at least $1 - 1/n^7$. 
    \end{corollary}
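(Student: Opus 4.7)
The plan is a direct union-bound argument on top of Lemma~\ref{lma:detect-heavy}. The conclusions $H' \subseteq H_{\alpha}(\Gamma)$ and $N^+(v_0^a)\setminus H' \subseteq L_{4\alpha}(\Gamma)$ are exactly the contrapositives of the two implications in Lemma~\ref{lma:detect-heavy}: the first says ``every $u$ placed into $H'$ is actually $\alpha$-heavy for $\Gamma$,'' and the second says ``every $u$ left out of $H'$ is actually $4\alpha$-light for $\Gamma$.'' So the corollary fails only if there exists some $u \in N^+(v_0^a)$ for which one of the two implications of the lemma fails.

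First, I would fix an arbitrary $u \in N^+(v_0^a)$ and invoke Lemma~\ref{lma:detect-heavy} (whose hypothesis $|\Gamma|\ge \alpha$ is inherited from the corollary), obtaining that both implications for this particular $u$ hold simultaneously with probability at least $1 - 1/n^8$. Next, I would apply a union bound over all $u \in N^+(v_0^a)$. Since $|N^+(v_0^a)| \le n$, the probability that the conclusion of Lemma~\ref{lma:detect-heavy} fails for at least one vertex $u$ is at most $n \cdot (1/n^8) = 1/n^7$. Equivalently, with probability at least $1 - 1/n^7$, for every $u \in N^+(v_0^a)$ both implications hold; this is exactly the statement $H' \subseteq H_{\alpha}(\Gamma)$ and $N^+(v_0^a)\setminus H' \subseteq L_{4\alpha}(\Gamma)$.

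There is essentially no obstacle here: this is a textbook union bound, and the exponent $1/n^8$ in Lemma~\ref{lma:detect-heavy} was chosen precisely with enough slack to absorb a factor of $n$ from summing over $N^+(v_0^a)$. The only minor point to be careful about is that the two implications of Lemma~\ref{lma:detect-heavy} must be read as a joint event for each fixed $u$ (holding together with probability $\ge 1 - 1/n^8$), so that a single union bound over $u$ suffices rather than a double union bound over both $u$ and the two implications. If one preferred to read the lemma as giving each implication separately with probability $1 - 1/n^8$, a union bound over both implications and all $u$ still yields failure probability at most $2n/n^8 \le 1/n^7$ for $n \ge 2$, so the conclusion is unchanged.
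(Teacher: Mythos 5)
Your proof is correct and matches the paper's approach exactly: the paper derives this corollary from Lemma~\ref{lma:detect-heavy} by precisely the standard union bound over all $u \in N^+(v_0^a)$ that you spell out, absorbing the factor $|N^+(v_0^a)| \le n$ into the exponent to go from $1/n^8$ to $1/n^7$. Your remark about reading the lemma's two implications as a joint event per vertex is a fair clarification of the intended reading, and either reading yields the claimed bound as you note.
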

    Note that the running time of the algorithm \textsf{Sample}$(\Gamma, \alpha)$ is $O(\frac{\Gamma \ln n}{\alpha})$.
    
    \subsection{Correctness Proof of Algorithm \textsf{Construct}}

    Now we turn to the analysis of the algorithm \textsf{Construct}. Our first goal of this analysis is to show that the algorithm \textsf{Construct} constructs a desired $(a, \delta/8, 2)$-dense set $T^a$ in $O(n/\delta)$ iterations.
    As we stated at the description of the algorithm (in section \ref{sec:constructT}), the key observation for this goal is that in each iteration the algorithm adds a light vertex $x_i$ to $S_i$. We show this observation in Lemma \ref{lma:add-vertex}. Before proving Lemma \ref{lma:add-vertex}, we state auxiliary lemma, which proves any strict run of the algorithm divides $N^+(v_0^a)$ into a set $R_i$ of light vertices and a set $H_i$ of heavy vertices with high probability. This lemma shows that the algorithm selects light vertex $x_i$ in each strict run of the algorithm. 
    
    \begin{lemma}
        \label{lma:strict-decision}
        If the strict run occurs at the $i$-th iteration, $R_i \subseteq L_{\delta / 2} (N^+(S_{i-1}^a))$ and $H_i \subseteq H_{\delta / 8} (N^+(S_{i-1}^a))$ hold with probability at least $1 - O(1/n^7)$.
    \end{lemma}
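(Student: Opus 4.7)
The approach is to combine Corollary \ref{crl:detect-heavy} (probabilistic correctness of \textsf{Sample}) with Proposition \ref{prop:heaviness} (monotonicity of heaviness) via a single union bound over all \textsf{Sample} calls executed by \textsf{Construct}.

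First I would bound the total number of \textsf{Sample} invocations made up to the end of iteration $i$. Because each accepted $x_j$ is verified to be $(\delta/2)$-light for the current $N^+(S^a_j)$, adding $x_j$ grows $|N^+(S^a)|$ by at least $\delta/2$; hence \textsf{Construct} performs at most $O(n/\delta)$ iterations, and each iteration calls \textsf{Sample} at most twice. Applying the $1/n^8$ per-call failure bound from Corollary \ref{crl:detect-heavy} and taking a union bound over these at most $O(n/\delta)$ calls, every \textsf{Sample} invocation up to and including iteration $i$ returns a correct $H'$ (i.e., $H' \subseteq H_{\delta/8}(\Gamma)$ and $N^+(v_0^a)\setminus H' \subseteq L_{\delta/2}(\Gamma)$) with probability $1 - O(1/n^7)$. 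The rest of the argument is conditioned on this good event.

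For $H_i \subseteq H_{\delta/8}(N^+(S_{i-1}^a))$, pick any $u \in H_i$ and let $j \le i-1$ be the iteration whose \textsf{Sample} call first inserted $u$. That call used either $\Gamma = N^+(S_j^a)\setminus N^+(S_{j-1}^a)$ (optimistic) or $\Gamma = N^+(S_j^a)$ (strict); in both cases $\Gamma \subseteq N^+(S_{i-1}^a)$. The good event gives $u \in H_{\delta/8}(\Gamma)$, and Proposition \ref{prop:heaviness} promotes this to $u \in H_{\delta/8}(N^+(S_{i-1}^a))$. For $R_i \subseteq L_{\delta/2}(N^+(S_{i-1}^a))$ I would use the strict-run hypothesis: the relevant strict call runs \textsf{Sample} on $\Gamma = N^+(S^a)$ for the snapshot $S^a$ that the lemma pairs with $S_{i-1}^a$, so on the good event $N^+(v_0^a)\setminus H' \subseteq L_{\delta/2}(\Gamma)$. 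Since the subsequent bookkeeping only enlarges $H$, hence only shrinks $R$, every surviving $v \in R_i$ is $\delta/2$-light for $\Gamma$, and because the light set is monotone decreasing under supersets (if $T \subseteq T'$ then $L_{\delta/2}(T') \subseteq L_{\delta/2}(T)$) this is at least as strong as $\delta/2$-lightness for $N^+(S_{i-1}^a)$.

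The main obstacle I anticipate is purely bookkeeping: reconciling the lemma's subscript convention (where $R_i, H_i$ are paired with $N^+(S_{i-1}^a)$) against the pseudocode (where the strict run inside iteration $i$ writes into $R_{i+1}, H_{i+1}$ and processes $N^+(S_i^a)$). Once the index alignment between the lemma statement and the algorithm's snapshot at the moment of the strict update is pinned down at the outset, both inclusions fall out of Corollary \ref{crl:detect-heavy} and Proposition \ref{prop:heaviness} with no additional probabilistic work beyond the initial union bound.
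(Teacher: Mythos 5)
Your proof is correct in substance, but it takes a more global route than the paper's, which is strictly local to the strict call. The paper's entire proof consists of verifying the hypothesis of Corollary \ref{crl:detect-heavy} for the single strict run at iteration $i$ --- since $|S_i^a| \ge 1$, we get $|\Gamma| = |N^+(S_i^a)| \ge \delta > \alpha = \delta/8$ --- and then invoking the corollary once, charging one $1/n^7$ failure probability. The fact that $H_i$ is a cumulative union over all earlier optimistic and strict calls, which (as you rightly observe) forces the correctness of every one of those calls plus the promotion via Proposition \ref{prop:heaviness}, is not addressed inside this lemma at all; the paper defers it to Lemma \ref{lma:density}, where the union bound over all $O(n)$ \textsf{Sample} invocations is taken. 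Your version folds that accumulation argument into the present lemma: a union bound over every call up to iteration $i$, heaviness promoted from each vertex's insertion-time $\Gamma$ (always a subset of $N^+(S_{i-1}^a)$) by Proposition \ref{prop:heaviness}, and downward monotonicity of lightness for the $R_i$ inclusion. This makes the lemma self-contained and arguably more honest about what $H_i \subseteq H_{\delta/8}(N^+(S_{i-1}^a))$ actually requires, at the cost of duplicating work that the paper places in Lemma \ref{lma:density}. Your point about the index mismatch between the lemma's $(R_i, H_i, N^+(S_{i-1}^a))$ and the pseudocode's $(R_{i+1}, H_{i+1}, N^+(S_i^a))$ is well taken; the paper is indeed sloppy there, and its later applications (e.g., in Lemma \ref{lma:add-vertex}) use the shifted indexing you describe.

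Two small repairs. First, you never check the corollary's hypothesis $|\Gamma| \ge \alpha$, which is the only substantive verification in the paper's proof; it is immediate for the strict call ($|N^+(S_i^a)| \ge \delta + 1$), but for the optimistic calls $\Gamma = N^+(S_j^a) \setminus N^+(S_{j-1}^a)$ it rests on the $\delta/2$-lightness of the previously added vertex --- itself a probabilistic event --- which is exactly the inductive wrinkle the paper confronts in Lemma \ref{lma:density}; your union-bound-first structure can absorb this, but it needs to be stated. Second, you attribute a $1/n^8$ per-call failure to Corollary \ref{crl:detect-heavy}; the corollary's per-call bound is $1/n^7$ (the $1/n^8$ of Lemma \ref{lma:detect-heavy} is per vertex), so your union bound over $O(n/\delta)$ calls yields failure $O\left((n/\delta) \cdot n^{-7}\right)$, e.g.\ $O(n^{-6.5})$ under $\delta \ge \sqrt{n}$, rather than $O(1/n^7)$. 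This is harmless for every downstream use in the paper, but the exponent in your write-up should be adjusted accordingly.
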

    
    \begin{proof}
        Since $S_i^a$ is nonempty and its cardinality is monotonically increasing, we have $|S_i^a| \ge 1$, and thus $\Gamma = N^+(S_i^a) \geq \delta$ holds at 
        the beginning of the strict run at the $i$-th iteration. This implies $|\Gamma| \ge \alpha = \delta/8$. By Corollary~\ref{crl:detect-heavy}, $R_i \subseteq L_{\delta / 2}(N^+(S_{i-1}^a))$ and 
        $H_i \subseteq H_{\delta / 8} (N^+(S_{i-1}^a)) $ holds with probability at least $1 - 1/n^7$.
    \end{proof}
    
    \begin{lemma}
        \label{lma:add-vertex}
        For any $i$, $x_{i}$ is $\delta / 2$-light for $N^+(S_i^a)$
        with probability at least $1 - O(1/n^7)$.
    \end{lemma}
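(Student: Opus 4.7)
The plan is to prove Lemma~\ref{lma:add-vertex} by a clean case analysis on the branch of the $i$-th iteration that actually sets $x_i$. Inspecting the pseudocode of \textsf{Construct}, $x_i$ is assigned in exactly one of two places: either (a) on line~11, as one of the $\lceil 4 \log n \rceil$ random samples $u_j \in R_{i+1}$ after the optimistic run, or (b) on line~21, as an arbitrary vertex of the updated $R_{i+1}$ after the strict run of \textsf{Sample}$(N^+(S_i^a), \delta/8)$. I will handle these two cases separately, and then combine.

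In case (a), the assignment $x_i \leftarrow u_j$ only takes place inside the guard \textbf{if $u_j$ is $\delta/2$-light for $N^+(S_i^a)$}, and this test is evaluated exactly from the maintained set $\mathit{NS}_i^a = N^+(S_i^a)$ via $|N^+(S_i^a) \cap N^+(u_j)|$. No sampling is involved in this check, so $x_i$ is $\delta/2$-light for $N^+(S_i^a)$ with probability~$1$ conditioned on falling into this branch.

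In case (b), the branch is entered only when every $u_j$ turned out to be $\delta/2$-heavy for $N^+(S_i^a)$, and then the algorithm invokes \textsf{Sample}$(N^+(S_i^a), \delta/8)$ and sets $R_{i+1} \leftarrow N^+(v_0^a) \setminus H_{i+1}$ using the updated $H_{i+1}$. This is exactly the situation handled by Lemma~\ref{lma:strict-decision}, which asserts that after a strict run, the resulting $R$-set is contained in $L_{\delta/2}(N^+(S_i^a))$ with probability at least $1 - O(1/n^7)$. Hence any choice of $x_i \in R_{i+1}$ on line~21 is $\delta/2$-light for $N^+(S_i^a)$ with the same probability bound.

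Combining the two cases, the failure event for the lemma is entirely contained in the failure event of the single invocation of \textsf{Sample} analyzed in Lemma~\ref{lma:strict-decision}, so the overall failure probability is $O(1/n^7)$, giving the claimed bound. I do not expect a genuine obstacle here; the only subtlety to watch is ensuring that the ``check via $\mathit{NS}_i^a$'' in case (a) is indeed an exact test rather than a sampled one, so that case (a) contributes no probabilistic error, and that the indexing convention of Lemma~\ref{lma:strict-decision} (where the $R$-set after the strict run at iteration $i$ is compared against $N^+(S_{i-1}^a)$ in the lemma's notation, corresponding to $N^+(S_i^a)$ in the pseudocode's indexing) is applied consistently.
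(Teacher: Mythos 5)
Your proof is correct and follows essentially the same route as the paper's: the identical two-case split (direct visit with an exact lightness check via $\mathit{NS}_i^a$, contributing no probabilistic error, versus selection from $R_{i+1}$ after a strict run), with Lemma~\ref{lma:strict-decision} supplying the $1-O(1/n^7)$ bound in the second case. Your closing remark about the index shift between the statement of Lemma~\ref{lma:strict-decision} (which refers to $N^+(S_{i-1}^a)$) and its application here (against $N^+(S_i^a)$) is a fair observation of a notational wrinkle the paper glosses over, but it does not change the argument.
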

    
    \begin{proof}
        We first consider the case that $x_{i}$ is added without strict runs. 
        In this case, agent $a$ directly visits $x_{i}$ and checks its heaviness. Hence, the lemma obviously holds. We next consider the case that $x_{i}$ 
        is added after the strict run. By Lemma \ref{lma:strict-decision},  $R_{i+1} \subseteq L_{\delta / 2}(N^+(S_{i}^a))$ holds with probability at least $ 1 - 1/n^7$. Thus any vertex $v \in R_{i+1}$ is $\delta/2$-light for $N^+(S_i^a)$. Hence, the lemma holds. 
    \end{proof}
    
    Now we show that in each iteration $H_{i+1} \subseteq H_{\delta/8}(N^+(S_{i}^a))$ holds.  
    
    \begin{lemma}
        \label{lma:density}
        For any $i \in [1, n-1]$, let $Y_i$ be the indicator random variable taking
        $Y_i = 1$ if and only if $H_{i+1} \subseteq H_{\delta / 8} (N^+(S_{i}^a))$ holds. Then we have $\Pr \left[\bigcap_{i=1}^{n} Y_i = 1 \right] \ge 1 - O(1/n^6)$.
    \end{lemma}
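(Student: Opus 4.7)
The plan is to proceed by induction on $i$ combined with a union bound, carrying the invariant $H_{i+1}\subseteq H_{\delta/8}(N^+(S_i^a))$ forward across iterations. The base case is trivial since the algorithm initializes $H_1=\emptyset$. For the inductive step, assume the event $\bigcap_{j<i}\{Y_j=1\}$ holds; by Proposition~\ref{prop:heaviness} applied to the containment $N^+(S_i^a)\supseteq N^+(S_{i-1}^a)$, this immediately yields $H_i\subseteq H_{\delta/8}(N^+(S_i^a))$. Therefore it suffices to show that each set added to $H_{i+1}$ during the $i$-th iteration---namely, the output of the optimistic call $\textsf{Sample}(N^+(S_i^a)\setminus N^+(S_{i-1}^a), \delta/8)$ and, if triggered, the output of the strict call $\textsf{Sample}(N^+(S_i^a), \delta/8)$---is a subset of $H_{\delta/8}(N^+(S_i^a))$ with high probability.

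For the strict run the argument is direct: $|N^+(S_i^a)|\ge \delta+1\ge \delta/8$ is automatic, so Corollary~\ref{crl:detect-heavy} returns $H'\subseteq H_{\delta/8}(N^+(S_i^a))$ with probability at least $1-1/n^7$. The optimistic run is more delicate, since Corollary~\ref{crl:detect-heavy} demands the precondition $|N^+(S_i^a)\setminus N^+(S_{i-1}^a)|\ge \delta/8$. To establish this precondition I invoke Lemma~\ref{lma:add-vertex}: the vertex $x_{i-1}$ added in the previous iteration is $\delta/2$-light for $N^+(S_{i-1}^a)$ with probability $1-O(1/n^7)$, so $|N^+(x_{i-1})\setminus N^+(S_{i-1}^a)|\ge (\delta+1)-\delta/2>\delta/8$. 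Given this size bound, Corollary~\ref{crl:detect-heavy} yields $H'\subseteq H_{\delta/8}(N^+(S_i^a)\setminus N^+(S_{i-1}^a))$ with probability $1-1/n^7$, and a second application of Proposition~\ref{prop:heaviness} lifts the containment to $H_{\delta/8}(N^+(S_i^a))$, as desired.

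To conclude, I will apply a single union bound over all $i\le n$ iterations, accounting for the at most two \textsf{Sample} invocations and the single invocation of Lemma~\ref{lma:add-vertex} per iteration (each contributing failure probability $O(1/n^7)$). This yields $\Pr[\bigcap_{i=1}^{n} Y_i=1]\ge 1-O(n/n^{7})=1-O(1/n^6)$, matching the claimed bound. The main obstacle I anticipate is the mild circularity between this lemma and Lemma~\ref{lma:add-vertex}: the validity of the optimistic run at iteration $i$ rests on $x_{i-1}$ being light, while the lightness guarantee itself is conditional on the earlier sampling decisions being well-behaved. The cleanest resolution is to phrase the inductive invariant jointly---simultaneously asserting the heaviness containment for $H_{i+1}$ and the lightness of $x_i$---so that both guarantees propagate in lockstep and the union bound can be taken uniformly over iterations.
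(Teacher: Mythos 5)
Your proposal is correct and follows essentially the same route as the paper's proof: verify the precondition $|\Gamma|\ge\alpha$ of Corollary~\ref{crl:detect-heavy} for every \textsf{Sample} call (trivially for strict runs since $|N^+(S_i^a)|\ge\delta$, and via the $\delta/2$-lightness of the previously added vertex from Lemma~\ref{lma:add-vertex} for optimistic runs), lift the containment through Proposition~\ref{prop:heaviness}, and union-bound over the $O(n)$ calls. The circularity you worry about at the end does not actually arise, because Lemma~\ref{lma:add-vertex} rests only on the direct degree check and on Lemma~\ref{lma:strict-decision} (whose precondition $|N^+(S_i^a)|\ge\delta/8$ holds unconditionally), not on the correctness of the optimistic runs, so the dependence chain is acyclic and your joint invariant, while harmless, is unnecessary.
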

    
    \begin{proof}    
        Since $S_i^a$ is nonempty and its cardinality is monotonically increasing, we have $|S_i^a| \ge 1$, and thus $\Gamma = N^+(S_i^a) \geq \delta$ holds at the 
        beginning of the strict run in the $i$-th iteration. It implies $|\Gamma| \ge \alpha = \delta/8$. By Lemma \ref{lma:add-vertex}, $ |N^+(S_{i-1}^a) \cap N^+(x_i)| < \delta / 2 $ holds, and then we have $|N^+(S_{i-1}^a) \setminus N^+(x_i)| \geq \delta / 2  > \alpha$. Hence any call of 
        \textsf{Sample} satisfies the assumption of Corollary~\ref{crl:detect-heavy}
        with probability at least $1 - 3/n^7$. Since \textsf{Sample} is called at most 
        $O(n)$ times, a standard union-bound argument provides the lemma. 
    \end{proof}

    By using Lemma \ref{lma:density}, we prove that the algorithm eventually finds a $(a, \delta/8, 2)$-dense set $T^a$ in Lemma \ref{lma:iteration-bound}. We also prove the upper bound for the number of iterations of the algorithm. 
    
    \begin{lemma}
        \label{lma:iteration-bound}
        Algorithm \textsf{Construct} outputs a $(a, \delta/8, 2)$-dense set $T^a$ within $O(n / \delta)$ iterations 
        with probability at least $1 - O(1/n^5)$.
    \end{lemma}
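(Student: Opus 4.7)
The plan is to track how much $|N^+(S_i^a)|$ grows per iteration, show it must saturate quickly, and then read off the dense property from the termination condition.

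First I would argue that every non-terminating iteration increases $|N^+(S_i^a)|$ by $\Omega(\delta)$. By Lemma~\ref{lma:add-vertex}, with probability at least $1 - O(1/n^7)$ the vertex $x_i$ added at iteration $i$ is $\delta/2$-light for $N^+(S_i^a)$, i.e.\ $|N^+(x_i) \cap N^+(S_i^a)| < \delta/2$. Since $|N^+(x_i)| \geq \delta + 1$, this yields $|N^+(x_i) \setminus N^+(S_i^a)| > \delta/2$, so $|N^+(S_{i+1}^a)| \geq |N^+(S_i^a)| + \delta/2$. Because $|N^+(S_i^a)| \leq n$ always, the algorithm can execute at most $2n/\delta = O(n/\delta)$ iterations in which a new $x_i$ is actually added; once no such light candidate is produced we must have fallen through to termination.

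Next I would identify what termination gives us. The while-loop exits at some iteration $i^{*}$ with $R_{i^{*}} = \emptyset$, meaning $H_{i^{*}} = N^+(v_0^a)$. By Lemma~\ref{lma:density}, the good event $\bigcap_i Y_i = 1$ holds with probability $1 - O(1/n^6)$, and on this event $H_{i^{*}} \subseteq H_{\delta/8}(N^+(S_{i^{*}-1}^a))$. Hence $N^+(v_0^a) \subseteq H_{\delta/8}(T^a)$ for the output $T^a = N^+(S_{i^{*}-1}^a)$. The remaining two conditions of the $(a,\delta/8,2)$-dense definition are immediate: $v_0^a \in S_{i^{*}-1}^a \subseteq T^a$, and $\mathrm{dist}(v_0^a,w) \leq 2$ for every $w \in T^a$ because $S_{i^{*}-1}^a \subseteq N^+(v_0^a)$, so $T^a \subseteq N^+(N^+(v_0^a))$.

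The only place any care is needed is in aggregating the failure probabilities. Lemma~\ref{lma:add-vertex} is invoked on each of the $O(n/\delta) = O(n)$ iterations, contributing at most $O(n) \cdot O(1/n^7) = O(1/n^6)$ by a union bound, and Lemma~\ref{lma:density} already absorbs its own union bound with failure probability $O(1/n^6)$. Summing gives a combined failure probability of $O(1/n^6) = O(1/n^5)$, which is the bound claimed. I do not expect a real obstacle here; the slightly delicate point is only to notice that both lemmas draw their high-probability events from the common pool of \textsf{Sample} calls, so one global union bound over those calls is sufficient and no independence argument between iterations is required.
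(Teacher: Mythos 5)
Your proof is correct and follows essentially the same route as the paper's: Lemma~\ref{lma:add-vertex} gives the $\delta/2$ growth of $|N^+(S_i^a)|$ per iteration and hence the $2n/\delta$ iteration bound, while $R_{i^*}=\emptyset$ together with Lemma~\ref{lma:density} yields the third dense condition, with the same union-bound aggregation. Your explicit derivation of $|N^+(x_i)\setminus N^+(S_i^a)| > \delta/2$ from $\delta/2$-lightness and $|N^+(x_i)|\ge \delta+1$ is actually slightly more careful than the paper's statement of that step.
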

    
    \begin{proof}
        Let $T^a = N^+(S_j^a)$. That is, the algorithm terminates at the $j$-th iteration.
        First we show that $T^a$ is 
        $(a, \delta/8, 2)$-dense. Since $S_i^a \subseteq N^+(v_0^a)$ holds,
        the first and second conditions of $(a, \delta/8, 2)$-dense condition
        are obviously satisfied. Consider the third condition.   
        By definition, two 
        sets $R_i$ and $H_i$ are always a partition of $N^+(v_0^a)$. 
        Thus we obtain $H_j = N^+(v_0^a)$ because $R_j = \emptyset$ holds. 
        Lemma~\ref{lma:density} implies that $N^+(v_0^a) = H_j \subseteq 
        H_{\delta/8}(N^+(S_j^a))$ holds. That is, $T^a = N^+(S_j^a)$ satisfies
        the third condition.
    
        We next show that the event $R_i = \emptyset$ occurs within $O(n / \delta)$ iterations. By Lemma \ref{lma:add-vertex}, $|N^+(x_i) \setminus N^+(S_{i-1}^a)| \geq \delta / 2$ holds for any $x_i$. Then we have $|N^+(S_j^a)| \geq j \delta / 2$. Due to the trivial upper bound of $|N^+(S_j^a)| \leq n$, 
        we obtain $j \leq 2n/ \delta = O(n / \delta)$. The success probability of the lemma is derived from taking the union bound for at most $O(n)$ applications of Lemmas \ref{lma:add-vertex} and \ref{lma:density}. 
    \end{proof}

    We analyse the time complexity of the algorithm \textsf{Construct}.
    
    \begin{lemma}
        \label{time-comp}
        The total running time of \textsf{Construct} is $O(n \log^2 n / \delta)$ time with probability at least $1 - O(1/n^3)$. 
    \end{lemma}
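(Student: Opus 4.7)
The plan is to decompose the total cost of \textsf{Construct} into three pieces and bound each separately: (i) the optimistic calls \textsf{Sample}$(N^+(S_i^a)\setminus N^+(S_{i-1}^a),\delta/8)$, (ii) the per-iteration block of $\lceil 4\log n\rceil$ direct visits used to test candidates from $R_{i+1}$, and (iii) the strict calls \textsf{Sample}$(N^+(S_i^a),\delta/8)$. Each optimistic call costs $O(|N^+(S_i^a)\setminus N^+(S_{i-1}^a)|\log n/\delta)$ rounds, so since the sets $N^+(S_i^a)$ are monotonically nondecreasing in $i$, the sizes $|N^+(S_i^a)\setminus N^+(S_{i-1}^a)|$ telescope and sum to at most $|N^+(S_j^a)|\le n$. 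Hence the total cost of optimistic runs is $O(n\log n/\delta)$. For (ii), each candidate vertex lies in $N^+(v_0^a)$, so each visit together with the return to $v_0^a$ takes $O(1)$ rounds; combining this with Lemma~\ref{lma:iteration-bound}, which bounds the number of iterations by $O(n/\delta)$, the total cost is $O((n/\delta)\log n)$.

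The main work is to bound the number of strict runs by $O(\log n)$ with high probability; each strict run costs $O(n\log n/\delta)$ rounds because $|N^+(S_i^a)|\le n$, so this yields the stated $O(n\log^2 n/\delta)$ bound overall. Let $R_{i+1}^{(1)}$ denote the candidate set at iteration $i$ right after the optimistic update, and $R_{i+1}^{(2)}$ its value right after the strict run (when a strict run happens). The triggering condition for the strict run is that all $\lceil 4\log n\rceil$ independent uniform samples drawn from $R_{i+1}^{(1)}$ turn out to be $\delta/2$-heavy for $N^+(S_i^a)$. If fewer than half of the vertices of $R_{i+1}^{(1)}$ were $\delta/2$-heavy, this event would occur with probability at most $(1/2)^{4\log n}=1/n^4$. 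Hence, conditioning on the strict run being triggered, with probability at least $1-1/n^4$ at least half of $R_{i+1}^{(1)}$ is $\delta/2$-heavy for $N^+(S_i^a)$.

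Applying Corollary~\ref{crl:detect-heavy} to the strict call of \textsf{Sample}$(N^+(S_i^a),\delta/8)$, with probability at least $1-1/n^7$ every $\delta/2$-heavy vertex is placed into $H'$ (because it is in particular $\delta/8$-heavy, so not $\delta/2$-light), and hence into $H_{i+1}$. Therefore $|R_{i+1}^{(2)}|\le |R_{i+1}^{(1)}|/2$ with high probability. Since the candidate set never grows between iterations (because $H_i$ is monotonically nondecreasing), the size of $R$ immediately before the $(k+1)$-st strict run is at most its size immediately after the $k$-th strict run. Starting from $|R_1|\le\Delta+1\le n$, after $O(\log n)$ strict runs the set $R$ becomes empty, forcing termination. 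The main obstacle is this concentration argument for the number of strict runs, since it must combine the sampling bound with Corollary~\ref{crl:detect-heavy} uniformly over iterations.

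Finally, summing the three contributions gives $O(n\log^2 n/\delta)$ rounds. For the probability bound, union-bounding over the $O(n)$ invocations of \textsf{Sample} (each failing with probability at most $O(1/n^7)$ by Corollary~\ref{crl:detect-heavy}), the $O(\log n)$ triggering-concentration events (each failing with probability at most $1/n^4$), together with the events of Lemmas~\ref{lma:add-vertex}, \ref{lma:density}, and \ref{lma:iteration-bound}, yields an overall failure probability of $O(1/n^3)$, proving the lemma.
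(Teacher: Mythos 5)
Your proof is correct and follows essentially the same route as the paper's: a telescoping bound on the optimistic runs, the $(1/2)^{\lceil 4 \log n \rceil}$ sampling argument showing that each strict run halves $|R|$ (hence $O(\log n)$ strict runs of cost $O(n \log n / \delta)$ each), and a final union bound, with you being in fact slightly more explicit than the paper about the cost of the direct-visit blocks and about consistently using the $\delta/2$-heavy threshold that Corollary~\ref{crl:detect-heavy} actually supports. One small slip worth fixing: your parenthetical ``it is in particular $\delta/8$-heavy, so not $\delta/2$-light'' is a non sequitur --- the correct (and immediate) reason a $\delta/2$-heavy vertex lands in $H'$ is that $N^+(v_0^a) \setminus H' \subseteq L_{\delta/2}(\Gamma)$ holds with high probability, and $\delta/2$-heaviness directly negates $\delta/2$-lightness.
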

    
    \begin{proof}
        We first bound the total running time incurred by the part of 
        optimistic decision. Assume that $T^a$ is constructed at the $j$-th iteration. 
        For each $1 \geq i \geq j-1$, the optimistic run of \textsf{Sample}$(N^+(x_i) \setminus N^+(S_{i}^a), \delta/8)$ takes $96 \lceil |(N^+(x_i) \setminus N^+(S_{i}^a)| \ln n/ \delta \rceil$ rounds. Hence, the total running time is
        bounded by
        \begin{align*}
        \sum_{i=1}^{r} 96 \left\lceil \frac{ |N^+(x_i) \setminus N^+(S^a_{i})| \ln n}{\delta} \right\rceil &\leq O\left(\frac{ N^+(S^a_j) \log n }{ \delta }\right) \\ 
        &= O \left( \frac{n \log n}{\delta} \right).   
        \end{align*}
    
        We next consider the time complexity caused by the part of strict decision. We show that \textsf{Sample} is executed as a strict run at most $O(\log n)$ times. It is sufficient to prove that 
        at least a constant fraction of $R_i$ is moved to $H_{i+1}$ with high probability if the strict run occurs at the $i$-th 
        iteration.
        In each $i$-th iteration, let $g_i$ be the number of 
        $(\delta / 8)$-heavy vertices for $N^+(S_i^a)$. We show that $g_i / |R_i| \ge 1/2$ holds if the agent samples no light vertex from $R_i$ in the strict run of \textsf{Sample}. Consider the case of $g_i / |R_i| \leq 1/2$. Then the probability that the agent samples a $\delta/8$-heavy vertex is at most $1/2$. Thus, the probability that all of the sampled vertices are $\delta/8$-heavy is at most $(1/2)^{\lceil 4 \log n \rceil} \leq 1/n^4$. Conversely, if all of the sampled vertices are $\delta/8$-heavy, $g_i / |R_i| \geq 1/2$ holds with probability at least $1 - 1/n^4$. By Lemma \ref{lma:strict-decision}, 
        the strict run of \textsf{Sample} in the $i$-th iteration moves all the $\delta/8$-heavy vertices in $R_i$ to $H_{i+1}$ with high probability. Then at least a half of the elements in $R_i$ are deleted. Since the cardinality of $R_i$ never increases, the number of calls to \textsf{Sample} as a strict run is at most 
        $O(\log n)$ times with high probability. Each strict run takes $O((|N^+(S_i^a)| \log n ) / \delta)$ rounds, and thus
        the total running
        time of \textsf{Construct} is bounded by $O(( n\log^2 n ) / \delta)$. The success probability of the lemma is obtained by taking union bounds on $O(\log n)$ applications of Lemma \ref{lma:strict-decision}. 
    \end{proof} 
    
    Finally, we obtain the main lemma of \textsf{Construct}.
    
    \begin{lemma}
        Algorithm \textsf{Construct} outputs $T^a$ satisfying $(a, \delta/8, 2)$-dense condition in $O(n \log^2 n / \delta)$ rounds with probability at least $1 - O(1/n^3)$.
    \end{lemma}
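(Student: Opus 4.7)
The statement is essentially a packaging of the two preceding lemmas, Lemma~\ref{lma:iteration-bound} and Lemma~\ref{time-comp}, so my plan is to derive it as a direct corollary rather than re-prove anything from scratch. The first lemma already guarantees correctness: with probability at least $1 - O(1/n^5)$, when the while-loop of \textsf{Construct} terminates (i.e., when $R_i=\emptyset$), the set $N^+(S_j^a)$ returned on line~\textbf{return} satisfies all three bullet points of the $(a,\delta/8,2)$-dense condition. The second lemma gives the total round count: with probability at least $1 - O(1/n^3)$, the sum of the costs of all optimistic runs of \textsf{Sample} plus the $O(\log n)$ strict runs plus the $O(\log n)$ direct visits per iteration is $O(n\log^2 n/\delta)$.

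Thus the plan is simply: invoke both lemmas and take a union bound on their failure events. Since $O(1/n^5)+O(1/n^3)=O(1/n^3)$, the joint event that \textsf{Construct} both terminates within $O(n/\delta)$ iterations with a correct $(a,\delta/8,2)$-dense output \emph{and} does so in $O(n\log^2 n/\delta)$ rounds has probability at least $1 - O(1/n^3)$. No new random choices or inductive arguments are introduced at this level; all the delicate work has already been done in establishing Lemmas~\ref{lma:add-vertex}, \ref{lma:density}, \ref{lma:iteration-bound}, and \ref{time-comp}.

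The only point worth being careful about is that the high-probability events of Lemma~\ref{lma:iteration-bound} and Lemma~\ref{time-comp} are defined over the same sample space of random choices made by \textsf{Sample} and by the inner sampling loop of \textsf{Construct}; this is immediate because both lemmas are proved by union-bounding over the same family of $O(n)$ (or $O(\log n)$) invocations of Lemma~\ref{lma:strict-decision} and Corollary~\ref{crl:detect-heavy}. So the intersection really is the intended joint event, and the union bound is legitimate. I would not expect any serious obstacle here; the proof is at most a few lines citing the two preceding lemmas and writing down the union bound.
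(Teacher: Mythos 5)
Your proposal is correct and matches the paper exactly: the paper states this lemma with no separate proof, treating it as the immediate combination of Lemma~\ref{lma:iteration-bound} (correctness of the output within $O(n/\delta)$ iterations, failure probability $O(1/n^5)$) and Lemma~\ref{time-comp} (running time $O(n\log^2 n/\delta)$, failure probability $O(1/n^3)$) via a union bound, which is precisely your argument. Your added remark that both lemmas' failure events live over the same sample space of \textsf{Sample} invocations is a sound (if implicit in the paper) justification of that union bound.
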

    
    The combination of this lemma and Lemma~\ref{lemma:dense-rendezvous} deduces the correctness of our rendezvous algorithm.

    \begin{theorem}
        Let $G = (V, E)$ be any graph such that $\delta_G \ge \sqrt{n}$ holds. There is an algorithm that completes rendezvous within $O \left( \frac{n}{\delta} \log^2 n + \sqrt{\frac{n \Delta}{\delta}} \log n \right)$ rounds with high probability.
    \end{theorem}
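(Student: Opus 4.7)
The plan is to compose the two sub-algorithms in sequence on agent $a$'s side, while running $b$'s side of \textsf{Main-Rendezvous} from round $0$. Specifically, agent $a$ first executes \textsf{Construct} to build an $(a, \delta/8, 2)$-dense set $T^a \subseteq N^+(N^+(v_0^a))$, and then immediately executes its role in \textsf{Main-Rendezvous} using that $T^a$; agent $b$ runs its role in \textsf{Main-Rendezvous} throughout. The hypothesis $\delta_G \ge \sqrt{n}$ of Lemma~\ref{lemma:dense-rendezvous} is included in the theorem statement, so both preceding lemmas are applicable.

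By the preceding lemma, \textsf{Construct} terminates in $t_a = O(n \log^2 n / \delta)$ rounds and outputs a valid $(a,\delta/8,2)$-dense set with probability at least $1 - O(1/n^3)$. Conditioning on this event, Lemma~\ref{lemma:dense-rendezvous} yields rendezvous within $t_a + O(\sqrt{n\Delta/\delta}\log n)$ additional rounds with high probability. A single union bound over the two failure events gives the stated high-probability guarantee, and adding the two running times produces exactly the bound $O\!\left(\frac{n}{\delta}\log^2 n + \sqrt{\frac{n\Delta}{\delta}}\log n\right)$.

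The only point requiring care is that Lemma~\ref{lemma:dense-rendezvous} is phrased assuming agent $b$ begins its sampling \emph{after} $a$ finishes constructing $T^a$, whereas in the composed algorithm $b$ has already been sampling for $t_a$ rounds. I would briefly argue that this discrepancy is benign: the informed-set growth analysis in that lemma depends only on $b$'s independent uniform draws from $N^+(v_0^b)$ and the writes it performs on whiteboards (which $a$ never reads during \textsf{Construct}), so letting $b$ start earlier only enlarges the informed set $Z_t$ at every round, and the bound $|Z_t|\ge h$ at $t = t_a + 8\sqrt{n\Delta/\delta}\log n$ continues to hold, if anything with additional slack. No step in the previous proofs relies on synchronized start times, so the composition is clean.

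I do not anticipate any real obstacle here; the theorem is essentially a bookkeeping combination of the two lemmas, and all substantive probabilistic work was carried out in the analysis of \textsf{Sample}, \textsf{Construct}, and \textsf{Main-Rendezvous}.
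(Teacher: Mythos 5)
Your proposal is correct and takes essentially the same route as the paper, which proves the theorem simply by combining the final lemma on \textsf{Construct} (output of an $(a,\delta/8,2)$-dense $T^a$ in $O(n\log^2 n/\delta)$ rounds with probability $1-O(1/n^3)$) with Lemma~\ref{lemma:dense-rendezvous} and summing the two bounds under a union bound. Your extra care about agent $b$ sampling during the construction phase is sound but already implicit in the paper: Algorithm~\ref{alg:rendezvous} folds the construction of $T^a$ into agent $a$'s first step while $b$ marks from the start, and the proof of Lemma~\ref{lemma:dense-rendezvous} conservatively charges the growth of the informed set only from round $t_a$ onward, so $b$'s early start can only help.
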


    \section{Discussion}
\label{sec:discussion}

\subsection{Removing the Assumption of Min-Degree Knowledge}
\label{subsec:remove-assumption}
In the algorithm presented in Subsection \ref{sec:constructT}, we suppose that agents know a constant factor approximation of $\delta$. This assumption can be easily removed by 
a simple doubling-estimation mechanism. Precisely, in the construction of $T^a$ (which is the only  
part of the algorithm using $\delta$), agent $a$ initially sets $\delta'$ to the half of the degree of $v_0^a$. If the agent visits a vertex whose degree is less than $\delta'$, then it restarts the procedure of \emph{Construct} after halving $\delta'$. Note that we do not have to restart agent $b$ for 
synchronization because its behavior (in \emph{Main-rendezvous}) is inherently oblivious (i.e., iteratively marking neighbors). Eventually the procedure terminates without restarting
when $\delta' < \delta_G$ is satisfied. Since the running time of \textsf{Construct} is $O( ( n \log^2 n ) / \delta')$, the doubling
update of $\delta'$ does not incur any extra asymptotic cost. That is, if the estimation of $\delta'$ starts from
a range $[2^j, 2^{j+1}]$, the total running time is bound as follows:
\begin{align*}
\sum_{\lfloor \log \delta \rfloor \leq j' \leq j} &O(n\log^2 n / 2^{j'}) \\  
&= O(n\log^2 n / \delta) \cdot 
\left(1 + \frac{1}{2} + \dots + \frac{1}{2^{j - \lfloor \log \delta \rfloor}}\right) \\ 
&=O\left(\frac{n\log^2 n}{\delta}\right).
\end{align*}

\begin{corollary}
        The modified algorithm stated above outputs $T^a$ (equivalently, 
        $N^+(S_i^a)$) satisfying $(a, \delta'/8, 2)$-dense set in 
        $O(n \log^2 n / \delta')$ rounds with probability at least $1 - O(1/n^3)$.
\end{corollary}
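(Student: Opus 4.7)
The plan is to derive this corollary from the analysis of \textsf{Construct} already established, by treating the \emph{final} value of the doubling estimator (call it $\delta'$) as a surrogate for the true minimum degree throughout that analysis. Let $\delta^{(0)} > \delta^{(1)} > \cdots > \delta^{(k)} = \delta'$ denote the successive values used across the $k+1$ attempts. First I will argue that the loop terminates with $k = O(\log n)$ and that $\delta' \ge \delta_G/2$: once the current value satisfies $\delta^{(j)} \le \delta_G$, every vertex in $G$ has degree at least $\delta^{(j)}$, so the restart trigger cannot fire and the current attempt necessarily runs to completion; combined with $\delta^{(0)} \le \Delta/2 \le n$ and the halving rule, this both identifies the successful attempt and bounds their number.

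For correctness of the output, I will revisit the proofs of Lemmas~\ref{lma:detect-heavy}--\ref{lma:iteration-bound}, observing that they invoke $\delta$ only through the weaker fact ``every visited vertex $v$ satisfies $\deg(v) \ge \delta$'', which in turn is what gives $|\Gamma| \ge \delta/8$ at every call to \textsf{Sample} and ensures the existence of a light vertex at each iteration. In the modified algorithm, this same fact is enforced directly by the restart condition: on the final, uninterrupted attempt with parameter $\delta'$, no visited vertex had degree below $\delta'$. Hence the original proofs apply verbatim with $\delta$ replaced by $\delta'$, and Lemma~\ref{time-comp} yields a $(a, \delta'/8, 2)$-dense output $T^a$ in $O(n \log^2 n / \delta')$ rounds with probability at least $1 - O(1/n^3)$.

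To account for the time spent in the earlier, aborted attempts, I will upper-bound each attempt with parameter $\delta^{(j)} = 2^{k-j}\delta'$ by the worst-case time of a \emph{completed} run with the same parameter, namely $O(n\log^2 n / \delta^{(j)})$; this is legitimate because the restart trigger is checked at every visited vertex, so an aborted run terminates no later than a completed run on the same realization. Summing the resulting geometric series exactly as in the displayed calculation preceding the corollary yields a total of $O(n\log^2 n / \delta')$ rounds. A union bound over the $O(\log n)$ attempts preserves the probability statement since the per-attempt failure probability in the Chernoff bounds of Lemmas~\ref{lma:detect-heavy} and~\ref{lma:strict-decision} has ample slack to absorb the logarithmic factor.

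The step I expect to be most delicate is justifying that the per-attempt time bound of Lemma~\ref{time-comp}, originally proved for a completed run, transfers to an aborted one. This requires a careful but routine observation: the aborted run is a strict prefix of some hypothetical completed run with the same parameter, so its length is dominated by the high-probability upper bound that Lemma~\ref{time-comp} provides for that completed run; once this monotonicity is in hand, the geometric-series summation and the union bound close the argument.
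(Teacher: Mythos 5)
Your proposal is correct and follows essentially the same route as the paper: the paper's justification is precisely the doubling-estimation argument with the restart-on-low-degree trigger, termination once $\delta' \le \delta_G$ (hence $\delta' > \delta_G/2$ after $O(\log n)$ halvings), and the same geometric-series summation $\sum_{j'} O(n\log^2 n/2^{j'}) = O(n\log^2 n/\delta')$ displayed just before the corollary. Your additional care --- noting that the \textsf{Construct} analysis uses $\delta$ only through degree lower bounds on \emph{visited} vertices (which the restart condition enforces on the final attempt), and that an aborted attempt is a prefix dominated by the completed-run bound of Lemma~\ref{time-comp} --- makes explicit details the paper leaves implicit, but does not change the argument.
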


\subsection{Algorithm without Using Whiteboards}
In this subsection, we present a rendezvous algorithm \textsf{Rendezvous-without-Whiteboard} that does not use whiteboards, under the assumption
that nodes are tightly named (that is, $n' = O(n)$). We present the pseudo-code of the algorithm in Algorithm \ref{alg:rendezvous-without-wb}. This algorithm assumes that agents know the value of $n'$ and the minimum degree $\delta$, but the minimum-degree assumption can be removed by the technique in Section~\ref{subsec:remove-assumption}. In this algorithm, agent $a$ first constructs a set $T^a \subseteq N^+(N^+(v_0^a))$ in the same way as the original one (recall that \textsf{Construct} does not use whiteboards). 
In order to synchronize the iterative probings of vertices by both agents, they start \textsf{Rendezvous-without-Whiteboard} at round 
$t' = c_1 n' \log^2 n / \delta$ for sufficiently large constant $c_1$ such that the construction of $T^a$ 
finishes by round $t'$.

We define several notations. We denote the ID space $\{ 1, \dots, n' \}$ by 
$\mathbb{S}_{\mathit{ID}}$. For any integer $\beta$, we define the $\beta$-partition 
$\{ \mathbb{I}_{1} \dots, \mathbb{I}_{\lceil n/\beta \rceil} \}$ of $\mathbb{S}_{\mathit{ID}}$ as $\mathbb{I}_i = [(i-1)\beta + 1, i\beta] \}$ for all $i$. 
The goal of the algorithm is that for an appropriate $\beta$, the agents $a$ and 
$b$ respectively construct $\Phi_{a} \subseteq T^a$ and $\Phi_{b} \subseteq N^+(v_0^b)$ satisfying the following properties with high probability:
\begin{itemize}
   \item (intersection) $|\Phi_a \cap \Phi_b| \ge 1$.
   \item (sparseness) There exists some constant $c_2$ such that $| \Phi_a \cap \mathbb{I}_i | \le c_2 \log n$ and $|\Phi_b \cap \mathbb{I}_i| \le c_2 \log n$ hold for any $i \in [1, \lceil n/\beta \rceil]$. 
\end{itemize}
We first present the construction of $\Phi_a$ and $\Phi_b$ satisfying the properties above. For each $v \in T^a$, agent $a$ adds $v$ into $\Phi_a$ with probability $4 \ln n / \sqrt{\delta}$. Similarly, for each $v \in N^+(v_0^b)$, agent $b$ adds $v$ into $\Phi_b$ with probability $4 \ln n / \sqrt{\delta}$. Then we can guarantee with high probability that $\Phi_a$ and $\Phi_b$ satisfy the intersection property,
and also satisfy the sparseness property for 
$\beta = \lceil \sqrt{\delta} \rceil$ and $c_2 = 18$. 

We explain how rendezvous is achieved by using two sets $\Phi_a$ and $\Phi_b$. The agents $a$ and $b$ iterate the following operations for all 
$i = 1, 2, \dots, \lceil n / \sqrt{\delta} \rceil$ (referred as $i$-th \emph{phase} of agents $a$ and $b$). 
The $i$-th phase consists of $ \lceil 4 c_2 \ln n \rceil^2$ rounds, and starts at round $t' + (i-1)  \lceil 4 c_2 \ln n \rceil^2 + 1$. In the $i$-th phase, agent $a$ visits each vertex 
$v_j \in \Phi_a \cap \mathbb{I}_i$ in ascending order of its ID, and waits $ \lceil 4c_2 \ln n \rceil$ rounds 
at each visited vertex. After visiting all the vertices in $\Phi_a \cap \mathbb{I}_i$, the agent waits at the initial position until round $t' + i \lceil 4 c_2 \ln n \rceil^2$ to synchronize the next phase. The behavior of agent $b$ is similar to that of $a$. It visits each $v_k \in \Phi_b \cap \mathbb{I}_i$ in ascending order of its ID. The agent $b$ waits
at each visited vertex for two rounds. Agent $b$ repeats this process $\lceil 4 c_2 \ln n \rceil$ times. Then it 
waits on the initial position until $t' + i \cdot \lceil 4 c_2 \ln n \rceil^2$ rounds. We can show that agents $a$ and $b$ attain
 rendezvous in $\mathbb{I}_l$ such that $\Phi_a \cap \Phi_b \cap \mathbb{I}_l \neq \emptyset$ holds. The total time complexity is $O((n / \beta) \cdot \log^2 n) = O((n \log^2 n) / \sqrt{\delta})$ rounds.

\begin{algorithm}[htp]
   \caption{\textsf{Rendezvous-without-Whiteboards}}
   \label{alg:rendezvous-without-wb}
   \textbf{Operations of Agent $A$}
   
   \begin{algorithmic}[1]
    \setlength{\algorithmicindent}{15pt}
    \STATE \textsf{Construct}
    \STATE wait until $t = c_1 \left(\frac{n' \log^2 n}{\delta}\right)$
    \FORALL{$u \in T^a$}
        \STATE $\Phi^a \leftarrow \Phi^a \cup \{ u \}$ with probability $\frac{4 \log n}{\sqrt{\delta}}$
    \ENDFOR
    \FOR{$ i = 1 ~ \text{to} ~  \lceil n / \sqrt{\delta} \rceil $  }
        \FORALL{$u \in \Phi^a \cap \mathbb{I}_i $}
            \STATE visit $u$
            \STATE wait on $u$ until $\lceil 4 c_2 \log n \rceil$ time (including the round moving to $u$)
            \STATE return to $v^a_0$
        \ENDFOR
        \STATE wait on $v^a_0$ until time $c_1 \left(\frac{n' \log^2 n}{\delta}\right) + i\lceil 4 c_2 \log n \rceil^2$
    \ENDFOR 
\end{algorithmic}

   \textbf{Operations of Agent $B$}
   \begin{algorithmic}[1]
    \setlength{\algorithmicindent}{15pt}
    \FORALL{$u \in N^+(v_0^b)$}
        \STATE $\Phi^b \leftarrow \Phi^b \cup \{ u \}$ with probability $\frac{4 \log n}{\sqrt{\delta}}$
    \ENDFOR
    \STATE wait until $t = c_1 \left(\frac{n' \log^2 n}{\delta}\right)$
    \FOR{$i = 1$ to $ \lceil n / \sqrt{\delta} \rceil$}
        \FOR{$j = 1$ to $\lceil 4 c_2 \log n \rceil$}
            \FORALL{$u \in \Phi^b \cap \mathbb{I}_i$}
                \STATE visit $u$
                \STATE wait two time units on $v^b_0$
                \STATE return to $v^b_0$
            \ENDFOR
        \ENDFOR
        \STATE wait on $v^b_0$ until $t = c_1 \left(\frac{n' \log^2 n}{\delta}\right) + i\lceil 4 c_2 \log n \rceil^2$
    \ENDFOR
\end{algorithmic}
\end{algorithm}
   
\begin{theorem}
   Algorithm \textsf{Rendezvous-without-Whiteboard} achieves rendezvous in $O \left( t' + \frac{n}{\sqrt{\delta}} \log^2 n \right)$ rounds with probability at least $1 - O(1/n^2)$.
\end{theorem}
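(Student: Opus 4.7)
My plan is to establish three things in sequence: (i) both agents agree on the timing of phases, (ii) the random sets $\Phi_a$ and $\Phi_b$ satisfy the intersection and sparseness properties with high probability, and (iii) whenever these properties hold, the deterministic scheduling inside the matching phase forces agents $a$ and $b$ to occupy the same vertex in the same round. The time bound will then follow by summing the contributions of the three stages (construction of $T^a$, idle wait, and phase loop).

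I first handle the probabilistic analysis of $\Phi_a$ and $\Phi_b$. By the correctness of \textsf{Construct}, with probability at least $1 - O(1/n^3)$ the set $T^a$ is $(a,\delta/8,2)$-dense, so $v_0^b \in N^+(v_0^a) \subseteq H_{\delta/8}(T^a)$ and hence $|N^+(v_0^b) \cap T^a| \geq \delta/8$. Every vertex in this intersection is independently included in $\Phi_a \cap \Phi_b$ with probability $(4\ln n/\sqrt{\delta})^2 = 16\ln^2 n / \delta$, giving expected intersection size at least $2\ln^2 n$. A standard Chernoff bound yields $|\Phi_a \cap \Phi_b| \ge 1$ with probability $1 - 1/n^{\Omega(1)}$. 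For sparseness, fix an interval $\mathbb{I}_i$ of length $\lceil \sqrt{\delta}\rceil$. Each of the at most $\lceil \sqrt{\delta}\rceil$ eligible vertices enters $\Phi_a$ (resp.\ $\Phi_b$) independently with probability $4\ln n/\sqrt{\delta}$, so the expected count in $\mathbb{I}_i$ is $O(\log n)$. A Chernoff bound chooses $c_2$ (e.g., $c_2 = 18$) large enough to ensure $|\Phi_a \cap \mathbb{I}_i|, |\Phi_b \cap \mathbb{I}_i| \le c_2 \log n$ with probability $1 - 1/n^{\Omega(1)}$ per interval; a union bound over the at most $\lceil n/\sqrt{\delta}\rceil$ intervals then yields sparseness globally with probability $1 - O(1/n^2)$.

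Conditioned on intersection and sparseness, I pick $l$ such that there exists $v^\star \in \Phi_a \cap \Phi_b \cap \mathbb{I}_l$ and analyze the $l$-th phase, which both agents start at the same round $t' + (l-1)\lceil 4c_2 \ln n\rceil^2 + 1$ because of the synchronizing wait at the end of each phase. By sparseness, $|\Phi_a \cap \mathbb{I}_l| \le c_2 \log n$, so agent $a$ finishes visiting its entire list in at most $c_2 \log n \cdot \lceil 4c_2 \ln n\rceil \le \lceil 4c_2 \ln n \rceil^2$ rounds and, in particular, at some sub-interval $W$ of $\lceil 4c_2\ln n\rceil$ consecutive rounds it sits at $v^\star$. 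Within these same $\lceil 4c_2\ln n\rceil$ rounds agent $b$ is in the middle of its $\lceil 4c_2\ln n\rceil$ repeated passes through $\Phi_b \cap \mathbb{I}_l$; again by sparseness, each pass takes $O(c_2 \log n)$ rounds, so during window $W$ agent $b$ completes at least one full pass, which necessarily includes a visit to $v^\star$. Hence both agents occupy $v^\star$ in a common round, attaining rendezvous.

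The running time decomposes as $t' = O(n\log^2 n/\delta)$ for \textsf{Construct} (plus the padding wait), followed by $\lceil n/\sqrt{\delta}\rceil$ phases of length $\lceil 4c_2 \ln n\rceil^2$, yielding the claimed $O(t' + (n/\sqrt{\delta}) \log^2 n)$ total. Taking a union bound over the failure events of \textsf{Construct}, intersection, and sparseness gives success probability $1 - O(1/n^2)$. The main obstacle is the synchronization argument in the middle paragraph: I must verify carefully that the per-vertex waiting budget $\lceil 4c_2 \ln n\rceil$ used by agent $a$ is long enough for agent $b$ to complete a full pass through $\Phi_b \cap \mathbb{I}_l$ while $a$ is stationary at $v^\star$, and simultaneously that the sparseness constant $c_2$ is taken large enough (relative to the per-vertex cost of $b$'s inner loop) that the total per-phase budget $\lceil 4c_2 \ln n\rceil^2$ accommodates both schedules; this is what dictates the precise choice of $c_2$ in the Chernoff step.
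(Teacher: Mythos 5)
Your proposal is correct and follows essentially the same route as the paper's own proof: establish the intersection and sparseness properties of $\Phi_a$ and $\Phi_b$ via concentration and union bounds, then use the phase synchronization so that agent $a$'s $\lceil 4c_2\ln n\rceil$-round stay at a common vertex $v^\star \in \Phi_a \cap \Phi_b \cap \mathbb{I}_l$ overlaps a full pass of agent $b$, and finally sum $t'$ with $\lceil n/\sqrt{\delta}\rceil$ phases of length $O(\log^2 n)$. The only cosmetic difference is that you derive the intersection property from a Chernoff bound on the expected overlap of size $2\ln^2 n$ (using the $\delta/8$-heaviness of $v_0^b$ for $T^a$), whereas the paper bounds $\Pr[|\Phi_a\cap\Phi_b|=0]$ directly by $\left(1 - (4\ln n)^2/\delta\right)^{\delta/8} \le e^{-2\ln^2 n}$; these are equivalent, and your closing caveat about verifying that $b$'s pass length fits within $a$'s per-vertex waiting window is exactly the timing argument the paper's proof relies on.
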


\begin{proof}
First, we show that $\Phi_a$ and $\Phi_b$ satisfy the intersection property. 
By the independence of the probabilistic choices of agents $a$ and $b$, 
any node in $T^a \cap N^+(v_0^b)$ is contained in both $\Phi_a$ and $\Phi_b$ 
with probability $(4 \ln n / \sqrt{\delta})^2 = (4 \ln n)^2 /\delta$. Hence the probability $p$ that $|\Phi_a \cap \Phi_b| = 0$ is upper bounded by
$ p \le \left( 1 - \frac{ (4 \ln n)^2 }{\delta} \right)^{\delta / 8} \le \mathrm{e}^{2 \ln^2 n} \le \frac{1}{n^2}. $
That is, the intersection property is satisfied with high probability.
Next, we show that $\Phi_a$ and $\Phi_b$ satisfy the sparseness property. 
For any $i \in [1, \lceil n /  \sqrt{\delta}  \rceil] $, let 
$Y^a_i$ be the number of vertices in $N^+(v_0^a) \cap \mathbb{I}_i$. Then we have $\expect{Y^a_i} \le \lceil \sqrt{\delta} \rceil \cdot 4 \ln n / \sqrt{\delta} \le 9 \ln n$. Applying the Chernoff bound, the probability $\Pr[Y^a_i \ge 18 \log n]$ is upper bounded by 
$
   \Pr[Y^a_i \ge 18 \ln n] \le \Pr[Y^a_i \ge (1 + 1) 9 \ln n] \le e^{3 \ln n} \le \frac{1}{n^3}. 
$ By taking union bound over all $i \in [1, \lceil n /  \sqrt{\delta} \rceil]$, $a$, and $b$, the probability that $\Phi_a$ and 
$\Phi_b$ do not satisfy the sparseness property is at most $3/n^2$. 

Finally, we show that if $\Phi_a$ and $\Phi_b$ satisfy the two properties, then rendezvous is achieved within $O((n \log^2 n) / \sqrt{\delta})$ rounds. 
We consider the $l$-th part such that $|\mathbb{I}_l \cap \Phi_a \cap \Phi_b| \ge 1$ holds. Let $r$ be any vertex in $|\mathbb{I}_l \cap \Phi_a \cap \Phi_b|$, and $s$ be the order of $r$ in $\Phi_a \cap \mathbb{I}_l$ following IDs. 
By the definition of the algorithm, both $a$ and $b$ starts phase $l$ at round $t' + (l-1) \lceil 4 c_2 \ln n \rceil^2 + 1$. In addition, the time when agent $a$ stays at $r$ is from round $t' + (i-1) \lceil 4 c_2 \ln n \rceil^2 + (s-1)\lceil 4 c_2 \ln n \rceil -2$ to $t' + (i-1) \lceil 4 c_2 \ln n \rceil^2 + s\lceil 4 c_2 \ln n \rceil -2$.
During that period, agent $b$ visits all the nodes in $\Phi_b \cap \mathbb{I}_l$.
That is, rendezvous is achieved. 
\end{proof}

\section{Impossibility for Sub-linear Time Rendezvous}
\label{sec:impossibility}

In this section, we show four impossibility results for sublinear-time rendezvous, which respectively concern the four unconventional assumptions of our algorithm, namely, bounded minimum degrees,  accessibility to neighborhood IDs, initial distance one, and randomization. In each proof, we show the impossibility results in the models relaxing the corresponding assumption. We define some terminologies used in the proofs. Given a graph $G$ and an algorithm $\mathcal{A}$, let $\hat{X}(G, a, v, f(n))$ be the random variable representing the set of vertices visited by agent $a$ initially at vertex $v$ in $G$ in the first consecutive $f(n)$ rounds. While this is an illegal run because $b$ is not in the graph, but can identify the (probabilistic) set of vertices $a$ visits. Also, we define $X(G, a, v, f(n))$ to be the vertex set defined as $X(G, a, v, f(n)) = \{ x \in V(G) ~|~ \Pr [x \in \hat{X} (G, a, v, f(n))] \le 1/4 \}$. 

\subsection{Lower bound in the Case of bounded minimum degrees}

First, we show that there is a graph instance with minimum degree $\delta = o(\sqrt{n})$ and $\Delta = \omega(\sqrt{n})$ 
such that any algorithm needs $\Omega(\Delta)$ rounds for neighborhood rendezvous. Precisely, the $\Omega(n  / \delta)$-round 
lower bound is obtained in the graphs with $\delta = o(\sqrt{n})$ and $\Delta = \Omega(\sqrt{n})$.

\begin{theorem}
    \label{thrm:smalldelta}
    Letting $\delta = o(\sqrt{n})$ and $\Delta = \omega(\sqrt{n})$, the $(\Delta, \delta, 1)$-rendezvous problem has
    a class of instances where any rendezvous algorithm takes $\Omega(\Delta)$ rounds with a constant probability. 
    In particular, the $(n/2, 1, 1)$-rendezvous problem has a class of instances where any rendezvous algorithm 
    takes $\Omega(n)$ rounds with a constant probability.
\end{theorem}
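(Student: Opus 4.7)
The plan is to construct a ``double-star-like'' family of instances on which both agents face a matching symmetry-breaking bottleneck, and then lift a distributional lower bound for deterministic algorithms to a worst-case randomized bound via Yao's minimax principle. Take two distinguished vertices $u$ and $v$ joined by an edge, and attach $\Delta - 1$ private ``port'' vertices to each. Off each port, hang a disjoint gadget of $\delta - 1$ internal vertices arranged so that every port and every internal gadget vertex has degree exactly $\delta$; this gives a graph $G$ with $n = \Theta(\Delta \delta)$ vertices (padding one gadget lets us hit any desired $n$ in the allowed range), with $\delta_G = \delta$ and $\Delta_G = \Delta$. Place $v_0^a = u$ and $v_0^b = v$, so the initial distance is $1$. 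The crucial structural observation is that the single edge $\{u, v\}$ is the unique bridge between the $u$-side and $v$-side of $G$, hence any rendezvous at round $t$ forces either that $a$ has visited $v$ by round $t$ or that $b$ has visited $u$ by round $t$.

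I would then consider the input distribution obtained by drawing the ID assignment $\phi : V \to [0, n-1]$ uniformly at random from all bijections, and fix any deterministic algorithm for agent $b$. Whenever $b$ sits at $v$ having so far visited $k$ of $v$'s neighbors, each such visited neighbor is revealed to be a port (by its degree $\delta < \Delta$) and so certified to be different from $u$. By the randomness of $\phi$, the position of $u$ in the port list $P_v$ is uniform over the $\Delta - k$ unvisited slots, because swapping $\phi$ on any two unvisited neighbors of $v$ together with the IDs of their (untouched) gadgets yields an equally likely assignment producing the same observations. Hence the conditional probability that $b$'s next move from $v$ lands on $u$ is at most $1/(\Delta - k)$, and summing over all rounds gives
\[
\Pr[b \text{ visits } u \text{ within } t \text{ rounds}] \;\le\; \sum_{k=0}^{t-1} \frac{1}{\Delta - k} \;\le\; \frac{t}{\Delta - t}.
\]
A symmetric argument controls the probability that $a$ visits $v$. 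Setting $t = \Delta/4$ and union-bounding, with probability at least $1/3$ neither event occurs, so by the bottleneck observation no rendezvous has been achieved. Yao's minimax principle then promotes this to a lower bound against randomized algorithms on a worst-case instance. For the second statement I specialize to $\delta = 1$: the gadgets collapse to nothing, $G$ becomes a double star with $n/2 - 1$ leaves at each of $u$ and $v$, and the same proof yields the $\Omega(n)$ lower bound for the $(n/2, 1, 1)$-rendezvous problem.

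The most delicate step is establishing the conditional uniformity used above: that after $b$'s exploration, nothing $b$ has seen on the $v$-side (IDs of visited ports, IDs inside their gadgets, whiteboard contents, etc.) correlates with which of the remaining unvisited ports is $u$. This uses that the topology is fixed while only the ID bijection is random, and that until the bridge is crossed agent $a$'s activity on the $u$-side is invisible to $b$; formally, one exhibits a measure-preserving involution on the space of assignments that swaps any two unvisited ports of $v$ (together with their entire attached gadgets) and leaves the joint distribution of $b$'s observations invariant. Combining the analyses for $a$ and $b$ via a union bound then delivers the $\Omega(\Delta)$ bound holding with constant probability.
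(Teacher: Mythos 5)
Your proposal is correct and reaches the right conclusion, but by a genuinely different route from the paper's. The paper uses no input distribution, no conditional uniformity, and no Yao step: for the given (possibly randomized) algorithm it runs each agent alone on star graphs $S_1(j)$ and $S_2(k)$, notes that an $o(n)$-round run visits $o(n)$ vertices so the set $X(\cdot)$ of vertices visited with probability at most $1/4$ has size exceeding $n/4$, and then a counting/pigeonhole argument on the induced bipartite relation over ID pairs (more than $n^2/4$ directed edges among $n^2/4$ pairs) exhibits a single pair $(j,k)$ rarely probed in both directions; gluing the two stars along the edge $(j,k)$ (with leaves replaced by cliques of size $(n-2)/(2\Delta)$ in the general case) gives a fixed hard instance on which rendezvous fails within $f(n)$ rounds with probability at least $1/2$. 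Both arguments rest on the same two pillars you identify: the edge joining the two centers is a bridge, and each agent's pre-crossing behavior is indistinguishable from a single-agent run on its own side (the coupling you rightly flag as delicate; the paper bakes it into the definition of $\hat{X}$). Your route buys explicit constants (failure probability $1/3$ at $t = \Delta/4$ via $\sum_{k=0}^{t-1} 1/(\Delta - k) \le t/(\Delta - t)$) and a bound that is uniform over the instance distribution; its cost is precisely the machinery the paper avoids, and there your involution needs two repairs. First, since $u$ has degree $\Delta$ and carries the entire opposite side while a port carries only a $(\delta-1)$-vertex gadget, the swap ``together with gadgets'' is ill-defined when one of the two swapped neighbors is $u$; you should swap only the two bare labels of $u$ and the unvisited port, which suffices because neither vertex nor the port's gadget has been visited and degrees of unvisited neighbors are not observable from $v$. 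Second, the accessible map $P_v$ reveals which ID sits at which local port index, so if the hard distribution randomizes only the ID bijection $\phi$ over a fixed $\hat{P}_v$, the slot leading to $u$ is deterministic and a trivial algorithm defeats the distribution; you must also randomize the port numberings (equivalently, compose the label swap with the transposition of the two affected port slots) so that the swapped assignment produces literally identical observations. Two cosmetic points: clique-through-port gadgets give internal vertices degree $\delta - 1$, so take cliques of size $\delta + 1$; and the padding needed when $\Delta\delta = o(n)$ deserves an explicit sentence, though the paper's own choice $s = (n-2)/(2\Delta) = \Omega(\delta)$ quietly assumes the same regime $\Delta\delta = O(n)$.
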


\begin{proof}
    We first consider the case of $\Delta = n/2$ and $\delta = 1$ for simplicity of argument.
    Suppose for contradiction that an algorithm $\mathcal{A}$ achieves rendezvous within $f(n) = o(n)$ rounds 
    with high probability for the
    $(n/2, 1, 1)$-rendezvous problem. Assume that $n$ is a multiple of $4$ for simplicity, and let $[1, n]$ be
    the domain of vertex IDs. First, we consider a star graph $S_1(j)$ of $n/2 + 1$ vertices, where the ID of the center is $j \in [n/2 + 1, n]$, and IDs of all leaves are from $[1, n/2]$. In this graph we put agent $a$ at the center vertex $j$, and run $\mathcal{A}$ during $f(n)$ rounds. It is easy to verify $|X(S_1(j), a, j, f(n))| > n/4$ because $f(n)$ is sublinear of $n$. Next, we consider a star graph $S_2(k)$ of $n/2 + 1$ vertices that 
    consists of the center vertex with ID $k \in [1, n/2]$ and leaf sets with IDs $[n/2+1, n]$. It also satisfies $|X(S_2(k), b, k, f(n))| > n/4$.
    Now we consider a directed bipartite graph $G' = ([1, n/2], [n/2 + 1, n], E)$. The edge set $E$ is defined as
    $E = \{ (h, i) \mid h \in X(S_1(i), a, i, f(n)) \vee h \in X(S_2(i), b, i, f(n)) \}$. Since we have $|X(S_1(i), a, i, f(n))| > n/4$ and $|X(S_2(i), b, i, f(n))| > n/4$ for all $i$, the total number of directed 
    edges is more than $(n/2 \cdot n/4) \cdot 2 = n^2 / 4$. This means that there exists at least one pair $(j, k)$ such that both $(j, k)$ and $(k, j)$ are contained in $E$.
    We consider the graph that consists of two star graphs of $n/2 + 1$ vertices
    sharing an edge (Fig. \ref{fig:smalldelta} (a)). The IDs of the two center vertices are $j$ and $k$, and the IDs of $j$'s leaves
    are from $[n/2 + 1, n] \setminus \{ k \}$, and those of $k$'s leaves are from $[1, n/2] \setminus \{ j \}$. The edge $(j, k)$ connects the two centers. In this graph, when we execute the algorithm $\mathcal{A}$ locating the two agents at $j$ and $k$ respectively, it is guaranteed that each agent does not pass through 
    edge $(j, k)$ in the first consecutive $f(n)$ rounds with probability at least $1/4$. That is, 
    the algorithm does not achieve rendezvous within $f(n)$ rounds with probability at least $1/2$. This is a contradiction. 
    
    The general case can be proven in the same way as the argument above. The only difference is to change the degree of the center vertex to $\Delta$ and replace all the leaves of star graphs with a clique of size $s = \frac{n-2}{2\Delta} =  \Omega(n / \Delta) = \Omega(\delta)$ where exactly one vertex is adjacent to the center (Fig. \ref{fig:smalldelta} (b)). That graph obviously satisfies 
    the constraint of min/max degrees, and the proof above also applies to it.
\end{proof}

\begin{figure}[h]
    \centering
    \includegraphics[width=8cm]{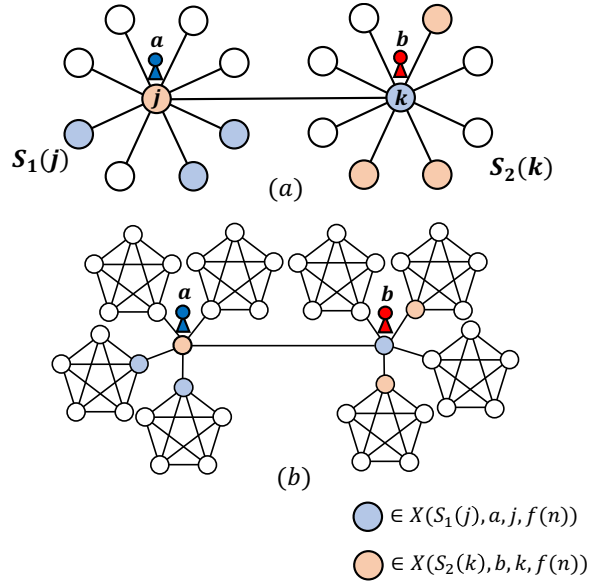}
    \caption{Proof of Theorem \ref{thrm:smalldelta}}
    \label{fig:smalldelta}
  \end{figure}

\subsection{Lower bound in the Case of the No Accessibility to IDs of Neighborhood Vertices}

Next, we show that any algorithm solving the $(\Theta(n), \Theta(n), 1)$-rendezvous problem requires $\Omega(n)$ rounds
in the worst case if agents have no access to IDs of neighborhood vertices. 
\begin{theorem}
    \label{thrm:noaccess}
    Let $n$ be even, $n \ge 6$, $\delta = n/2-1$ and $\Delta = n/2-1$, and assume that any agent has no access to neighborhood IDs. 
    Then there exists an instance of $(\Delta, \delta, 1)$-rendezvous problem where any rendezvous algorithm 
    takes $\Omega(\Delta)$ rounds with a constant probability. 
\end{theorem}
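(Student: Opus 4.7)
The plan is to exhibit an $(n/2-1)$-regular graph with only two edges between two otherwise-disjoint dense halves, and then use a Yao-style random-port-numbering argument to show that without access to neighborhood IDs any algorithm needs $\Omega(n)$ rounds to ever cross from one half to the other.

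First I construct the graph. Take two cliques on disjoint vertex sets $A = \{a_1,\dots,a_{n/2}\}$ and $B = \{b_1,\dots,b_{n/2}\}$, delete the internal edges $(a_1,a_2)$ and $(b_1,b_2)$, and add the bridge edges $(a_1,b_1)$ and $(a_2,b_2)$. A direct degree count gives $\delta_G = \Delta_G = n/2-1$, and by construction the only edges between $A$ and $B$ are these two bridges. Place agent $a$ at $a_1$ and agent $b$ at $b_1$; these are adjacent. The structural observation driving the lower bound is that unless at least one of the two agents traverses one of the two bridges, agent $a$'s trajectory stays in $A$ and agent $b$'s trajectory stays in $B$, so they never occupy the same vertex and rendezvous is impossible.

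For the lower bound I apply Yao's principle: fix any deterministic rendezvous algorithm and sample each vertex's port numbering $\hat{P}_v$ independently and uniformly from the bijections $[0,\deg(v)-1] \to N(v)$. Under the no-neighborhood-ID assumption an agent learns $\hat{P}_v(p)$ only by traversing port $p$; the standard symmetry argument then yields that for any adaptive probing strategy at $v$, the probability that a specific neighbor (in our case the bridge neighbor) lies among the first $k$ distinct ports probed at $v$ is at most $k/\deg(v)$. Bridges are incident, on the $A$ side, only to $a_1$ and $a_2$, so applying this bound at each of $a_1$ and $a_2$ with $k = f(n)$ and taking a union bound yields
\[
\Pr[a \text{ traverses a bridge within } f(n) \text{ rounds}] \;\le\; \frac{2\,f(n)}{n/2-1} \;=\; O(f(n)/n).
\]
The same bound holds for $b$ by symmetry, and another union bound gives $\Pr[\text{some agent crosses}] = O(f(n)/n)$.

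Combining with the structural observation, $\Pr[\text{rendezvous within } f(n) \text{ rounds}] = O(f(n)/n)$, so choosing $f(n) = cn$ for a small enough constant $c$ makes this a constant strictly less than $1$; a Fubini/averaging step then produces a specific port numbering of $G$ on which the (randomized) algorithm fails with constant probability, yielding the desired $\Omega(n) = \Omega(\Delta)$ lower bound. The main technical obstacle I anticipate is justifying the independence used in the per-probe bound in the presence of whiteboards and adaptive coordination between the two agents. The resolution is structural: up to the first hypothetical bridge crossing, the visited sets of $a$ and $b$ lie in $A$ and $B$ respectively, so neither agent has read a whiteboard the other wrote nor obtained any information about the port numbering of any vertex in the other half, and hence the port numberings at $a_1$ and $a_2$ remain conditionally uniform over their not-yet-probed neighbors right up to the moment of a hypothetical crossing, which is exactly what the clean $k/\deg(v)$ bound requires.
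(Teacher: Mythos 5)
Your proof is correct, and it takes a genuinely different route from the paper's, even though both arguments end up with essentially the same hard instance: two $(n/2)$-cliques with one internal edge deleted from each and two bridge edges added, giving an $(n/2-1)$-regular graph with the agents placed on one bridge. The difference is where the adversary's power lives. The paper keeps the port numbering fixed and adapts the \emph{topology} to the algorithm: it runs each agent ``solo'' in an intact clique (the illegal single-agent runs defining $X(G,z,v,f(n))$), uses the fact that an $f(n)=o(n)$-round run visits $o(n)$ distinct vertices to find vertices $x_1, x_2$ that are visited with probability at most $1/4$, and then rewires the edges $(v_0^a,x_1)$ and $(v_0^b,x_2)$ into the two bridges while preserving the local port numbers, so each agent's pre-crossing behavior is distributed exactly as in the clique; a union bound gives failure probability at least $1/2$. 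You instead fix the graph (hence the bridge endpoints $a_1,a_2,b_1,b_2$) and randomize the \emph{port numberings}, prove the adaptive-probing bound $k/\deg(v)$ for locating a hidden bridge port under a uniformly random bijection, and fix a bad numbering by averaging at the end. Your route buys a quantitatively sharper conclusion --- success probability $O(f(n)/n)$ within $f(n)$ rounds, so even algorithms with small constant success probability need $\Omega(n)$ rounds, whereas the paper's argument as written only rules out success probability above about $1/2$ --- and it isolates the hardness at exactly two vertices. The paper's route buys a lower bound that does not require adversarial control of port numbers (the hard instance differs from the two cliques only in which two edges are rewired), and it reuses the $X(\cdot)$ machinery shared with Theorems \ref{thrm:smalldelta} and \ref{thrm:distance2}. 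The one delicate point in your argument --- that whiteboards and adaptive coordination cannot leak information about $\hat{P}_{a_1}$ or $\hat{P}_{a_2}$ before a first crossing, because until then the joint execution factors into independent solo runs on disjoint halves reading only their own whiteboards --- you identify and resolve correctly, and it is precisely the coupling the paper implements via its solo runs. A last cosmetic remark: your instance depends on the algorithm through the final averaging step, but so does the paper's (via the choice of $x_1,x_2$), which matches the intended for-every-algorithm-there-exists-an-instance reading of the theorem statement.
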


\begin{proof}
     Suppose for contradiction that an algorithm $\mathcal{A}$ achieves rendezvous within $f(n)= o(n)$ rounds with high probability for the $(n/2-1, n/2-1, 1)$-rendezvous problem. We first consider two cliques $C_1$ and $C_2$ of $n/2$ vertices where each vertex has an arbitrary ID. Let agent $a$ be located at $v_{0}^a$ in the clique $C_1$, and let agent $b$ be located at $v_{0}^b$ in the clique $C_2$. As the proof of Theorem \ref{thrm:smalldelta}, we make agents $a$ and $b$ execute algorithm $\mathcal{A}$ in each clique. By the assumption of $f(n) = o(n)$, it is easy to verify that
    $|X(C_1, a, v_0^a, f(n))| > n/4$ and $|X(C_1, b, v_0^a, f(n))| > n/4$ holds. Now we select vertices $x_1 \in X(C_1, a, v_0^a, f(n))$ and $x_2 \in X(C_2, b, v_0^b, f(n))$. Let $j = \hat{P}^
    {-1}_{v_0^a}(x_1)$, $k = \hat{P}^
    {-1}_{v_0^b}(x_2)$, $\bar{j} = \hat{P}^
    {-1}_{x_1}(v_0^a)$, 
    and $\bar{k} = \hat{P}^
    {-1}_{x_2}(v_0^b)$. We construct a graph $G$ by removing edges $(v_0^a, x_1)$ and $(v_0^b, x_2)$ from $C_1$ and $C_2$ respectively, and adding the edges $(v_0^a, v_0^b)$ and $(x_1, x_2)$. The local 
    port number of those edges are defined as $\hat{P}^
    {-1}_{v_0^a}(v_0^b) = j$, $\hat{P}^
    {-1}_{v_0^b}(v_0^a) = k$,  
    $\hat{P}^
    {-1}_{x_1}(x_2) = \bar{j}$, and $\hat{P}^
    {-1}_{x_2}(x_1) = \bar{k}$. The construction is illustrated in Fig. \ref{fig:noaccess}. Consider the $f(n)$-round run of $\mathcal{A}$ in $G$ 
    where two agents $a$ and $b$ start from $v_0^a$ and $v_0^b$ respectively. Since $v_0^a$ and $v_0^b$ are connected by an edge, this is an instance of the $(n/2-1, n/2-1, 1)$-rendezvous problem. Since $x_1 \in X(C_1, a, v_0^a, f(n))$, agent 
    $a$ visits $x_1$ or $v_0^b$ with probability at most $1/4$. Similarly, $b$ also visits $x_2$ or $v_0^a$ with
    probability at most $1/4$. This implies that with probability at least $1/2$ no agent moves along edge
    $(v_0^a, v_0^b)$ or $(x_1, x_2)$, that is, rendezvous is not achieved at round $n/2$ with a constant
    probability. This is a contradiction.
\end{proof}

\begin{figure}[h]
    \centering
    \includegraphics[width=8cm]{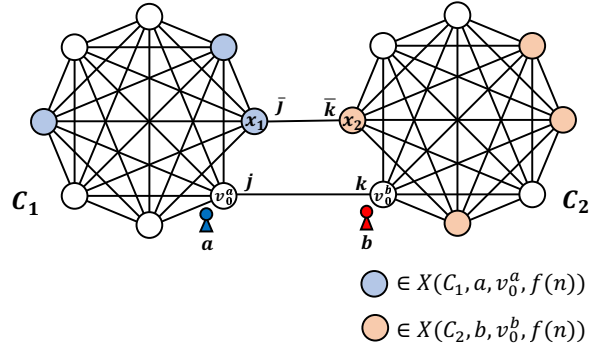}
    \caption{Proof of Theorem \ref{thrm:noaccess}}
    \label{fig:noaccess}
  \end{figure}

  \subsection{Lower bound in the Case of the distance two of initial locations}

Next, we show that the lower bound for the $(\Theta(n), \Theta(n), 2)$-rendezvous problem.

\begin{theorem}
    \label{thrm:distance2}
    Let $n$ be odd, $\Delta = n-1$ and $\delta = (n-1)/2$. $(\Delta, \delta, 2)$-rendezvous problem has a graph instance where any algorithm takes $\Omega(\Delta)$ rounds with a constant probability.
\end{theorem}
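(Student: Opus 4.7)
The plan is to construct a graph family $\mathcal{F}$ whose members are unions of two cliques $C_1, C_2$ of size $m = (n+1)/2$ sharing a single vertex $u$, with $v_0^a \in C_1 \setminus \{u\}$ and $v_0^b \in C_2 \setminus \{u\}$. Every such graph is a valid $(\Delta, \delta, 2)$-instance, since $\delta = (n-1)/2$, $\Delta = n-1$ (attained at $u$), and the two initial positions are at distance $2$ via $u$. The key structural fact is that every vertex other than $u$ lies in exactly one of $C_1, C_2$, so any rendezvous at round $t$ forces at least one agent to have been at $u$ by round $t$; hence the meeting time is at least $\min(T_a, T_b)$, where $T_a$ (resp.\ $T_b$) is the first round at which $a$ (resp.\ $b$) is at $u$. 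It therefore suffices to exhibit a graph on which both $T_a$ and $T_b$ are $\Omega(n)$ with constant probability.

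I would fix $V = [n]$, $v_0^a = 1$, $v_0^b = n$, and parametrize $\mathcal{F}$ by pairs $(u, A)$ with $u \in \{2, \dots, n-1\}$ and $A$ an $(m-2)$-subset of $\{2, \dots, n-1\} \setminus \{u\}$, setting $C_1 = \{1, u\} \cup A$ and $C_2 = V \setminus (\{1\} \cup A)$. A direct counting check shows that under the uniform measure on $\mathcal{F}$, the set $S := N(v_0^a) = A \cup \{u\}$ is uniformly distributed over $(m-1)$-subsets of $\{2, \dots, n-1\}$ and, conditional on $S$, the bridge $u$ is uniform in $S$; the analogous uniformity holds for $N(v_0^b)$. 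The main technical step is an indistinguishability argument via a hypothetical pure visit sequence: let $w_1(r_a, S), w_2(r_a, S), \dots$ denote the vertices $a$ would visit in the imagined execution where $b$ is absent from the graph. Since every vertex $w \in C_1 \setminus \{u\}$ has neighborhood $\{1\} \cup S \setminus \{w\}$ independently of which vertex in $S$ plays the role of $u$, and since whiteboards in $C_1 \setminus \{u\}$ would then contain only $a$'s own prior writes, this sequence is a deterministic function of $(r_a, S)$ alone. In the actual execution, whenever $T_a \leq T_b$ the agent $b$ has not yet traversed $u$ and therefore has not touched any whiteboard in $C_1 \setminus \{u\}$, so $a$'s real trajectory agrees with the pure sequence and $T_a$ equals its first hitting time of $u$. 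Because the pure sequence contains at most $t$ distinct elements of $S$ in any $t$-round prefix and $u$ is uniform in $S$, this yields $\Pr[T_a \leq t,\, T_a \leq T_b] \leq t/|S| = 2t/(n-1)$, with the symmetric bound $\Pr[T_b \leq t,\, T_b \leq T_a] \leq 2t/(n-1)$.

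A union bound gives $\Pr[\min(T_a, T_b) \leq t] \leq 4t/(n-1)$, so choosing $t = (n-1)/16$ makes the meeting time exceed $\Omega(\Delta)$ with probability at least $3/4$ over the joint randomness of the graph draw from $\mathcal{F}$ and the agents' coins; a standard averaging argument then extracts a fixed $G^* \in \mathcal{F}$ on which the same bound holds with respect to the algorithm's randomness alone. The most delicate step is handling the possible interference of $b$ on $a$'s view via whiteboards; the pure-sequence formulation sidesteps it by only bounding the $\{T_a \leq T_b\}$ branch, so that any actual interference scenario implies $T_b < T_a$ and is controlled by the symmetric inequality for $b$. The uniformity of $u$ marginally with respect to both $N(v_0^a)$ and $N(v_0^b)$, which is built into the symmetric parametrization of $\mathcal{F}$, is what lets a single randomization over $\mathcal{F}$ control both agents simultaneously, rather than requiring separate adversarial choices of the bridge for each of them.
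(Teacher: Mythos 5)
Your proof is correct, but it takes a genuinely different route from the paper's. Both arguments use the same hard topology---two cliques of size $(n+1)/2$ glued at a single cut vertex, so that rendezvous forces at least one agent to hit the bridge---but they make the bridge hard to find in different ways. The paper argues by contradiction via its $X(G,z,v,f(n))$ machinery (the set of vertices a solo agent visits with probability at most $1/4$ within $f(n)=o(n)$ rounds): it runs $a$ solo in $(n+1)/2$ candidate cliques $C_1,\dots,C_{(n+1)/2}$, harvests one rarely-visited vertex $w_i$ from each, forms the second clique $C'$ on exactly those harvested vertices, runs $b$ solo in $C'$, and glues $C_k$ and $C'$ at a vertex $x\in X(C',b,c',f(n))$; by construction $x=w_k$ is simultaneously rarely visited by both agents, and a union bound yields failure probability at least $1/2$. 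You instead use the probabilistic method over a symmetric family of instances in which, conditioned on each agent's visible neighborhood ($S$ for $a$, $S_b$ for $b$), the bridge $u$ is uniformly distributed; your pure-sequence coupling shows a $t$-round prefix exposes at most $t$ candidates for $u$, giving the hitting-time bound $t/|S|=2t/(n-1)$, and an averaging step extracts a fixed hard instance. Your uniform hidden bridge plays exactly the role of the paper's pigeonhole construction of $C'$ (making the glue vertex unlikely for both agents at once). What your version buys: it is rigorous precisely where the paper is terse---the paper tacitly assumes the solo-run visit distributions transfer to the glued graph, whereas your branching between $\{T_a \le T_b\}$ and $\{T_b \le T_a\}$ explicitly closes the only interference channels (whiteboard writes and presence detection, both of which must propagate through $u$), and you obtain explicit constants ($t=(n-1)/16$, success probability $3/4$) rather than a contradiction against a with-high-probability hypothesis. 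What the paper's version buys: the $X$-set technique is shared with its other lower bounds (Theorems~\ref{thrm:smalldelta} and~\ref{thrm:noaccess} and the deterministic one), so its proof of this theorem is short given that common machinery, and it requires no distributional symmetry of the instance family. One point worth making explicit if you write this up: the pure sequence of $a$ coincides with the real trajectory only up to and including the first arrival at $u$ (visiting $u$ reveals its degree-$(n-1)$ neighborhood and the execution diverges afterward), but that prefix is all your argument uses.
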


\begin{proof}
    Suppose for contradiction that an algorithm $\mathcal{A}$ achieves rendezvous within $f(n) = o(n)$ rounds with high probability for the $(n-1, (n-1)/2, 2)$-rendezvous problem. We first consider two cliques $C_1, C_2, \dots, C_{(n+1)/2}$ of $(n+1)/2$ vertices, where the $i$-th vertex set is $V(C_i)$. The IDs of the vertices of each clique $C_i$ are assigned from $\left[ \frac{n+1}{2}(i-1) + 1, \frac{n+1}{2} i \right]$ respectively for all $i \in [1, \frac{n+1}{2}]$. 
     Suppose that agent $a$ executes 
    algorithm $\mathcal{A}$ in each clique $C_i$ with an arbitrary initial location $c_i \in V(C_i)$. By the assumption of $f(n) = o(n)$, it is easy to verify that 
    $|X(C_i, a, c_i, f(n))| > (n+1)/4$. Let $V'$ a vertex set that obtained by picking up one vertex $w_i \in |X(C_i, a, c_i, f(n))|$ for all $i \in [1, \frac{n+1}{2}]$, and we construct a clique $C'$ 
    consisting of $(n+1)/2$ vertices whose IDs come from  $V'$. Suppose that agent $b$ executes algorithm $\mathcal{A}$ in $C'$ with an arbitrary initial location $c' \in V(C')$. It also satisfies $|X(C', b, c', f(n))| > (n+1)/4$ because $f(n)$ is sublinear of $n$. We pick up any vertex 
    $x \in X(C', b, c', f(n))$. Letting $C_k$ be the clique containing the vertex $x$, we construct the graph $G$ consisting of two cliques $C'$ and $C_k$ sharing $x$ (Fig. \ref{fig:distance2}). 
    Consider the $f(n)$-round run of $\mathcal{A}$ in $G$ where $a$ and $b$ respectively start from
    $c_k$ and $c'$. This is an instance of $(n - 1, (n-1)/2, 2)$-rendezvous problem. 
    Since $x \in X(C_k, a, c_k, f(n)) \cap X(C', b, c', f(n))$ holds,  $a$ and $b$ do not visit $x$ with probability at least $1/4$. That is,  
    they cannot attain the rendezvous within $f(n)$ rounds at least with probability $1/2$. This is a contradiction. 
\end{proof}

\begin{figure}[h]
    \centering
    \includegraphics[width=8cm]{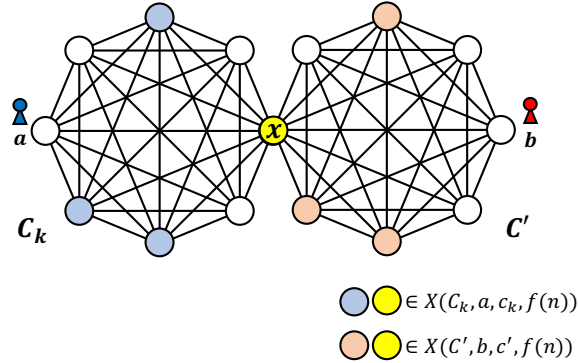}
    \caption{Proof of Theorem \ref{thrm:distance2}}
    \label{fig:distance2}
  \end{figure}

  \subsection{Lower bound for deterministic algorithms}
  \label{subsec:deterministiclowerbound}

We show that any deterministic algorithm solving the $(\Theta(n), \Theta(n), 1)$-rendezvous problem requires $\Omega(n)$ rounds in the worst case. First, we outline the proof strategy. Suppose for contradiction that an algorithm 
$\mathcal{A}$ solves $(\Theta(n), \Theta(n), 1)$-rendezvous problem within $o(n)$ rounds. In the proof, 
we adaptively construct the hard-core instance according to the behavior of $\mathcal{A}$: We start the construction with the two star graphs whose centers are the initial locations of two agents, and consider the run of $\mathcal{A}$ in that graph. When the agent moves to an unvisited vertex, we adaptively fix its neighborhood vertices. More precisely, the graph construction roughly follows the process below: We select in advance $\Omega(n)$ vertices as a pool, and if an agent moves to an unvisited vertex with degree $o(n)$, we select $\Omega(n)$ vertices from the pool as neighbors. This construction provides two independent graphs respectively associated with two agents. Finally, we carefully glue them in the way of guaranteeing the initial distance one and minimum degree $\Omega(n)$, which
becomes the instance yielding $\Omega(n)$-round lower bound.

We define some notations for explaining the details. Let $n$ be a multiple of $32$ for simplicity. As we stated, our proof first constructs two instances (for two agents) separately. By symmetry 
we only focus on the instance for agent $a$. We select 
an arbitrary ID space $\mathit{ID}_a$ whose size is $n/2 + 1$ for the instance of agent $a$, and fix an initial vertex $v^a_0 \in \mathit{ID}_a$.
Let $Q_t^a(\mathcal{A}, G, v^a_0) = \{v^a_0, v^a_1, \dots, v^a_{t}\}$. That is, 
$Q_t^a(\mathcal{A}, G, v^a_0)$ is the set of vertices visited by agent $a$ in the execution of $\mathcal{A}$ starting from $v^a_0$ in $G$ up to round $t$. We also define the sequence $S_t^a(\mathcal{A}, G, v^a_0) = (v^a_0, v^a_1, \dots, v^a_{t})$ of the vertices in 
$Q_t^a(\mathcal{A}, G, v^a_0)$ with order. 
Given $\mathcal{A}$, $G = (V, E)$, $v_0^a$ and a round $r \ge 0$, we can construct the \textit{execution spanning subgraph} $\hat{G}_r^a(\mathcal{A}, G, v_0^a) = (\hat{V}, \hat{E})$ such that $\hat{V} = N^{+}_{G}(Q_r^a(\mathcal{A}, G, v^a_0))$ and $\hat{E} = \{(u, v) ~|~ u \in Q_r^a(\mathcal{A}, G, v^a_0) \wedge (u, v) \in E \}$. Intuitively, $\hat{G}_r^a(\mathcal{A}, G, v_0^a)$ represents the 
substructure of $G$ seen by agent $a$  in the execution of $\mathcal{A}$ starting from $v_0^a$
up to round $r$. Now we assume any graph $G'$ such that $\hat{G}_r^a(\mathcal{A}, G, v_0^a) = \hat{G}_r^a(\mathcal{A}, G', v_0^a)$ holds. It is obvious that the behavior of $a$ in $G'$ starting from $v^a_0$ is 
completely same as that in $G$ up to round $r+1$, and thus we obtain the following proposition.
\begin{proposition}
    \label{prop:construction}
    Assume for any $G, G'$, we have $\hat{G}_r^a(\mathcal{A}, G, v_0^a)$ $ = \hat{G}_r^a(\mathcal{A}, G', v_0^a)$. Then, $S_{r+1}^a(\mathcal{A}, G, v_0^a)= S_{r+1}^a(\mathcal{A}, G', v_0^a)$ holds.
\end{proposition}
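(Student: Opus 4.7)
}
The plan is to prove, by induction on $t \in \{0, 1, \dots, r+1\}$, the stronger statement that at the beginning of round $t$ the pair consisting of agent $a$'s internal state and the whiteboard contents on $Q_{t-1}^a(\mathcal{A}, G, v_0^a)$ agree in the two executions, and consequently $v_t^a$ coincides. The base case $t=0$ is immediate: agent $a$ starts at the same vertex $v_0^a$ with the same initial memory, and all whiteboards are initialized to $\bot$.

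For the inductive step, assume the claim holds for some $t \le r$. By the induction hypothesis the location $v_t^a$ and the internal memory of agent $a$ at round $t$ are identical in $G$ and $G'$. Since $v_t^a \in Q_r^a(\mathcal{A}, G, v_0^a)$, the hypothesis $\hat{G}_r^a(\mathcal{A}, G, v_0^a) = \hat{G}_r^a(\mathcal{A}, G', v_0^a)$ implies that the set $N_{G}(v_t^a) = N_{G'}(v_t^a)$ and their identifiers (accessed through $P_{v_t^a}$) are the same in both instances. Moreover, because only agent $a$ is present in this illegal one-agent run and its sequence of visits $(v_0^a, \dots, v_t^a)$ and writes coincide in $G$ and $G'$ by induction, the whiteboard content at $v_t^a$ is identical in the two executions. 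Determinism of $\mathcal{A}$ then forces the same output tuple (new memory, destination port, whiteboard update), so the induction carries through to round $t+1$. Applying this up to $t=r$ yields $v_{r+1}^a$ identical in both, hence $S_{r+1}^a(\mathcal{A}, G, v_0^a) = S_{r+1}^a(\mathcal{A}, G', v_0^a)$.

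No step is technically difficult; the only point that requires care is observing that, in this one-agent scenario, the whiteboard contents of vertices in $Q_t^a$ are determined purely by $a$'s own history, so they do not introduce any dependence on parts of $G$ outside $\hat{G}_r^a$. Once this is noted, the induction is routine from the definition of $\hat{G}_r^a$, which by construction bundles together exactly the data (IDs of visited vertices and their neighbors, together with the local port numbering) on which $\mathcal{A}$'s decisions at rounds $0, \dots, r$ depend.
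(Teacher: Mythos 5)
Your proof is correct and takes essentially the same route as the paper, which offers no explicit proof at all: it simply asserts that the behavior of $a$ in $G'$ is identical to that in $G$ up to round $r+1$ because $\hat{G}_r^a(\mathcal{A}, G, v_0^a)$ captures exactly the data (visited vertices, their neighbors' IDs, and, in the one-agent run, the whiteboard contents determined by $a$'s own writes) on which a deterministic agent's decisions through round $r$ depend. Your round-by-round induction merely makes explicit the routine argument the paper leaves implicit, including the correct observation that whiteboards introduce no dependence outside $\hat{G}_r^a$ in a single-agent execution.
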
  

We show the lemma below, which is a key observation of our lower bound proof.

\begin{lemma}
    \label{lma:constructgraph}
    Let $\mathcal{A}$ be any algorithm terminating within $t \le n / 32$ rounds. 
    Suppose that $ID_a$ and $v_0^a$ is given. There exists a graph $G$ containing a vertex subset 
    $W \subseteq N_{G}(v_0^a)$ of size at least $13n/32$ such that (i) $(Q_t^a(\mathcal{A}, 
    G, v^a_0) \setminus \{v^a_0\}) \cap N^{+}_{G}(W) = \emptyset$ holds, and 
    (ii) for each vertex $w \in V(G) \setminus ( N^{+}_{G}(W) \setminus \{ v_0^a \})$, $|N_G(w)| = \Theta(n)$ holds.  
\end{lemma}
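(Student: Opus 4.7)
The plan is an adversarial online construction of $G$ against the deterministic algorithm $\mathcal{A}$. Let $V(G) = \{v_0^a\} \cup V'$, where $V'$ is a set of $n/2$ vertices whose IDs are taken from $ID_a$, partitioned as $V' = D_0 \sqcup W_0$ with $|D_0| = n/16$ and $|W_0| = 14n/32$. The edges committed up front are exactly these: $v_0^a$ is adjacent to every vertex of $V'$, and $D_0$ induces a clique. No other edges are yet revealed.

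I then simulate $\mathcal{A}$ from $v_0^a$, maintaining a protected set $W_r \subseteq W_0$ (initially $W_0$) and the visited set $V_r$ (initially $\{v_0^a\}$). Whenever $\mathcal{A}$ enters a previously unvisited vertex $v \ne v_0^a$ at the end of round $r+1$, commit $N_G(v)$ by one of two rules. If $v \in D_0$, set
\[
N_G(v) := \{v_0^a\} \cup (V_r \setminus \{v_0^a\}) \cup (D_0 \setminus \{v\}),
\]
leaving $W$ unchanged. Otherwise $v \in W_r$, in which case set $W_{r+1} := W_r \setminus \{v\}$ and
\[
N_G(v) := \{v_0^a\} \cup (D_0 \setminus V_r).
\]
By Proposition~\ref{prop:construction}, these deferred commitments determine $\mathcal{A}$'s execution uniquely; set $W := W_t$ after round $t$.

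The three requirements then follow almost mechanically. The second rule decreases $|W|$ by exactly one per firing and can fire at most $t \le n/32$ times, so $|W| \ge 14n/32 - n/32 = 13n/32$. For condition (i), neither formula contains any vertex of the current $W_r$, so every committed neighborhood of a visited non-$v_0^a$ vertex avoids $W$; in particular, vertices of $W$ are never touched after their initial edge to $v_0^a$, giving $N^+_G(W) = W \cup \{v_0^a\}$, which is disjoint from $Q_t^a \setminus \{v_0^a\}$. For condition (ii), $V(G) \setminus (N^+_G(W) \setminus \{v_0^a\}) = V(G) \setminus W$, and the degrees in $V(G) \setminus W$ are $|N_G(v_0^a)| = n/2$, $|N_G(d)| \ge n/16$ for every $d \in D_0$ (from the $D_0$-clique and the edge to $v_0^a$), and $|N_G(v)| \ge 1 + |D_0 \setminus V_r| \ge 1 + n/32$ for every visited $v \in W_0 \setminus W$, all of which are $\Theta(n)$.

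The main obstacle is verifying that the adaptive commitments remain internally consistent, i.e., that $u \in N_G(v) \iff v \in N_G(u)$ for every pair of committed vertices. The two formulas are tailored precisely for this: at the moment $u$'s neighborhood was committed at some round $r_u+1$, any later-committed $v$ had not yet been visited, so $v$ lay in the ``depleted universe'' $D_0 \cup (V_{r_u} \setminus \{v_0^a\})$ at that time if and only if $v \in D_0$. A short case analysis on whether each of $u, v$ lies in $D_0$ or in $W_0$ then shows that both formulas produce the same edge relation, with the $W$-case formula deliberately omitting all previously visited vertices because none of them had committed $v$ as a neighbor.
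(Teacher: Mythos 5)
Your proposal is correct and is essentially the paper's own argument: the paper likewise builds the graph adaptively from a star on $v_0^a$ plus a clique on a small set $\overline{P}$ (your $D_0 \cup \{v_0^a\}$), wires each newly visited pool vertex to the still-unvisited clique vertices, takes $W$ to be the unvisited pool vertices (size $\ge 7n/16 - n/32 = 13n/32$), and invokes Proposition~\ref{prop:construction} to argue the deferred edge placements never alter $\mathcal{A}$'s execution. Your only deviation is presentational—committing whole neighborhoods at visit time and checking symmetric consistency explicitly, where the paper maintains an explicitly symmetric edge-set sequence $G_0,\dots,G_t$ with an induction—so the two proofs coincide in substance.
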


\begin{proof}
   We adaptively construct the graph $G$ according to the agent $a$'s movement. Precisely, we incrementally fix the sequence of graphs $G_0, G_1, \dots, G_{t}$ such that  for each $r \in [0, t-1]$, $S_{r+1}^a(\mathcal{A}, G_{r}, v_0^a) = S_{r+1}^a(\mathcal{A}, G_{r+1}, v_0^a)$ is guaranteed. 
   The vertex set of each $G_i$ is common, which is denoted by $V$, and equal to $ID_a$ (i.e., $V = ID_a$). Let $P \subseteq V \setminus \{v^a_0\}$ be an arbitrary subset of size $7n/16$, and $\overline{P} = V \setminus P$. We also define $E_0 = \{(v^a_0, u) \mid u \in ID_a \setminus \{v^a_0\}\} \cup \{(u, v) \mid u, v \in \overline{P} \wedge u \neq v\}$. 
   For all $r \ge 0$, the algorithm $\mathcal{A}$ outputs the vertex $v_{r+1}^a \in N_{G_r}(v_{r}^a)$, as the destination of the movement at round $r$. Let 
   $Q_r = Q_{r}^a(\mathcal{A}, G_{r}, v_0^a)$ for short. There are following two cases:
   \begin{itemize}
       \item $v_{r+1}^a \in Q_{r} \cup \overline{P}$.
       \item $v_{r+1}^a \not\in Q_{r} \cup \overline{P}$ (that is, $v_{r+1}^a \in P \setminus Q_{r}$).
   \end{itemize}
   If $v_{r+1}^a \in Q_{r} \cup \overline{P}$ holds, we simply fix $G_{r+1} = G_{r}$ (i.e., $E_{r+1} = E_{r}$). Otherwise, we construct $E_{r+1}$ by adding to $E_{r}$ the edges from 
   $v_{r+1}^a$ to all the vertices in $\overline{P} \setminus Q_{r}$.
   In the following argument, we show  $S_{r+1}^a(\mathcal{A}, G_{r}, v_0^a) = S_{r+1}^a(\mathcal{A}, G_{r+1}, v_0^a)$ holds for any $r \in [0, t-1]$ by the induction on $r$. In the base case of $r=0$, we have $Q_0 = \{ v_0^a\}$ and $S_0^a(\mathcal{A}, G_0, v_0^a) = (v_0^a)$. The algorithm outputs the vertex $v_1^a$ as the destination of the movement in $G_0$ at round $r = 0$. In any case of updating rules, we can confirm that $\hat{G}_0^a(\mathcal{A}, G_0, v_0^a) = \hat{G}_0^a(\mathcal{A}, G_1, v_0^a)$. Therefore the vertex $v_1^a$ in $G_0$ coincides with the one in $G_1$ and we have $S_{1}^a(\mathcal{A}, G_0, v_0^a)= S_{1}^a(\mathcal{A}, G_1, v_0^a)$.  
   In the case of $r > 1$, assume that we are given $G_{r}$. The algorithm outputs the vertex $v_{r+1}^a$ as the destination of the movement in $G_{r}$ at round $r$. If $v_{r+1}^a \in Q_r \cup \overline{P}$, then $G_{r} = G_{r+1}$ holds, we have $S_{r+1}(\mathcal{A}, G_{r}, v_0^a) = S_{r+1}(\mathcal{A}, G_{r+1}, v_0^a)$. Otherwise, since we add edges between unvisited vertices (from $v_{r+1}^a \in P \setminus Q_r$ to each $u \in \overline{P} \setminus Q_r$), it follows $\hat{G}_r^a(\mathcal{A}, G_{r}, v_0^a) = \hat{G}_r^a(\mathcal{A}, G_{r+1}, v_0^a)$. Then by proposition \ref{prop:construction},  $S_{r+1}^a(\mathcal{A}, G_{r}, v_0^a)= S_{r+1}^a(\mathcal{A}, G_{r+1}, v_0^a)$ hold. 

   We set $P \setminus Q_{t} = W$ (that is, the vertices in $P$ not visited by round $t$). Finally, we show that $G_t$ has the desired property of the lemma. Since the agent visits to the vertices in $P$ at most $t = n/32$ times, the size of $W$ is at least $7n/16 - n/32 = 13n/32$. Since $W$ is the set of vertices which are unvisited by agent $a$ in the execution of $\mathcal{A}$ in $G_t$, by the updating rules of the graphs, each vertex in $W$ is only connected to $v_0^a$. Therefore we have $(Q_t^a(\mathcal{A}, G, v^a_0) \setminus \{v^a_0\}) \cap N^{+}_{G}(W) = \emptyset$. Since $\overline{P}$ is a clique in $G_0$ (and thus in $G_t$), for each vertex $u \in \overline{P}$, we have $|N_{G_t}(u)| \ge n/16 -1 = \Theta(n)$. For each vertex $u \in P \cap Q_r$, the size of $\overline{P} \setminus Q_r$ is at least $n/16 - n/32 = n/32$ at any round $r \in [0, t]$, and thus we have $|N_{G_t}(u)| \ge n/32 = \Theta(n)$. 
\end{proof}

By the proposition and the lemma, we can construct the hard-core instance for the deterministic algorithm. In the proof, we apply Lemma \ref{lma:constructgraph} several times according to the agent IDs and initial positions $v_0^a, v_0^b$. Therefore in the proof we add subscripts of agent IDs and initial vertices to $G$ and $W$ constructed by the lemma, as $G_{(a, v_0^a)}$ and $W_{(a, v_0^a)}$.

\begin{theorem}
    For $\Delta = \Theta(n)$ and $\delta = \Theta(n)$, the $(\Delta, \delta, 1)$-rendezvous problem has a graph instance where any deterministic algorithm takes $\Omega(\Delta)$ rounds with probability one.
\end{theorem}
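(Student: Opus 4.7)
The plan is to derive a contradiction by applying Lemma~\ref{lma:constructgraph} twice, once for each agent, and gluing the two adversarial instances into a single graph that satisfies distance one, $\delta=\Theta(n)$, $\Delta=\Theta(n)$, yet fools $\mathcal{A}$ for $t \le n/32$ rounds. Suppose for contradiction a deterministic algorithm $\mathcal{A}$ solves the $(\Delta,\delta,1)$-rendezvous problem within $t = n/32$ rounds.

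First, I would set up two ID spaces $ID_a,ID_b$ of size $n/2+1$ whose intersection is exactly the pair $\{v_0^a,v_0^b\}$, so that after gluing the total number of distinct IDs is exactly $n$. I would then invoke Lemma~\ref{lma:constructgraph} on $(\mathcal{A},ID_a,v_0^a)$ to obtain the adversarial graph $G_{(a,v_0^a)}$ and dangling set $W_{(a,v_0^a)} \subseteq N(v_0^a)$ of size at least $13n/32$; the pool $P_a$ will be chosen to contain the ID $v_0^b$, and analogously $P_b$ contains $v_0^a$. The key inherited property is that every vertex of $W$ is connected only to its own start vertex, while every other vertex has degree $\Theta(n)$, and the agent's trajectory avoids $N^+(W)\setminus\{v_0^{\cdot}\}$.

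Second, I would form $G$ by identifying the two vertices in $G_{(a,v_0^a)}$ and $G_{(b,v_0^b)}$ that share each common ID, so the final graph has a single vertex named $v_0^a$ and a single vertex named $v_0^b$. If $v_0^b \in W_{(a,v_0^a)}$ and $v_0^a \in W_{(b,v_0^b)}$, then the neighborhood of $v_0^a$ in $G$ equals $(V_a\setminus\{v_0^a\})\cup\{v_0^b\}=V_a\setminus\{v_0^a\}$ (because $v_0^b$ was already a neighbor of $v_0^a$ via its $W_{(a,v_0^a)}$-membership), which is identical to its neighborhood in $G_{(a,v_0^a)}$; symmetrically for $v_0^b$. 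To raise the minimum degree from $1$ (which is all the surviving dangling vertices have so far) to $\Omega(n)$, I would add the complete bipartite graph between $W_{(a,v_0^a)}\setminus\{v_0^b\}$ and $W_{(b,v_0^b)}\setminus\{v_0^a\}$. Both sides have size $\Omega(n)$, and the added edges are incident only to vertices that neither agent visits, so they do not alter any visited vertex's neighborhood.

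Third, using Proposition~\ref{prop:construction} inductively along both executions, I would argue that agent $a$ makes exactly the same sequence of moves in $G$ as in $G_{(a,v_0^a)}$ and agent $b$ makes exactly the same sequence as in $G_{(b,v_0^b)}$, for all $t$ rounds. Hence $a$ never enters $W_{(a,v_0^a)}$ (so never reaches $v_0^b$), $b$ never enters $W_{(b,v_0^b)}$ (so never reaches $v_0^a$), and the two trajectories live in the disjoint halves $V_a\setminus\{v_0^b\}$ and $V_b\setminus\{v_0^a\}$, so no rendezvous occurs in $t=n/32$ rounds, contradicting the hypothesis.

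The main obstacle I anticipate is ensuring the simultaneous membership $v_0^b \in W_{(a,v_0^a)}$ and $v_0^a \in W_{(b,v_0^b)}$, which is a mutual-dependency issue: the dangling set is only guaranteed to lie inside the pool, and which pool elements survive depends on $\mathcal{A}$'s trajectory in each instance. I would resolve this by first constructing $G_{(a,v_0^a)}$ and picking $v_0^b$ from the computed $W_{(a,v_0^a)}$ (so $v_0^b \in W_{(a,v_0^a)}$ holds by choice), then constructing $G_{(b,v_0^b)}$ and exploiting the flexibility of $P_b$ together with the counting margin $|W_{(b,v_0^b)}| \ge 13n/32 \gg t+1$ to select or reassign the role of $v_0^a$ so that it falls in $W_{(b,v_0^b)}$; any residual circularity is handled by a short pigeonhole or fixed-point argument over the remaining ID space.
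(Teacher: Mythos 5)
Your skeleton coincides with the paper's: apply Lemma~\ref{lma:constructgraph} once per agent on ID spaces of size $n/2+1$ overlapping exactly in the two start vertices, glue the instances along the edge $(v_0^a,v_0^b)$, add a complete bipartite graph between the two dangling sets $W_{(a,v_0^a)}\setminus\{v_0^b\}$ and $W_{(b,v_0^b)}\setminus\{v_0^a\}$ to push the minimum degree to $\Theta(n)$, and invoke Proposition~\ref{prop:construction} to conclude that each agent's trajectory is unchanged, so neither crosses the bridge within $n/32$ rounds. All of that is sound. The genuine gap sits exactly at the step you flagged: securing $v_0^b \in W_{(a,v_0^a)}$ \emph{and} $v_0^a \in W_{(b,v_0^b)}$ simultaneously. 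Your primary, sequential plan (fix $v_0^a$ arbitrarily, construct $G_{(a,v_0^a)}$, pick $v_0^b \in W_{(a,v_0^a)}$, then use the ``flexibility of $P_b$'' to force $v_0^a \in W_{(b,v_0^b)}$) fails. The pool offers no leverage: $v_0^a$ must be placed in $P_b$ for it to have any chance of landing in $W_{(b,v_0^b)} = P_b \setminus Q_t$, and once $P_b$ is fixed the lemma's construction, and hence $b$'s trajectory, is fully determined by $\mathcal{A}$. Concretely, consider an $\mathcal{A}$ in which $b$ visits its leaf neighbors in increasing order of ID: then for every start vertex $k$, the visited set contains the $n/32$ smallest IDs of $P_b$, so if your arbitrarily fixed $v_0^a$ happens to be among them, no choice of $v_0^b$ works; and ``reassigning the role of $v_0^a$'' invalidates $W_{(a,v_0^a)}$ and with it your choice of $v_0^b$, which is precisely the circularity you acknowledged.

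The resolution, which your closing sentence gestures at but does not carry out, must be a \emph{two-sided} count over all $\Theta(n^2)$ candidate pairs, not a local pigeonhole over the remaining ID space after one side is frozen. The paper runs Lemma~\ref{lma:constructgraph} for every $j \in [n/2+1, n]$ (agent $a$, with leaf IDs $[1,n/2]$) and every $k \in [1, n/2]$ (agent $b$, with leaf IDs $[n/2+1,n]$), and forms a directed bipartite graph with an edge $(j,k)$ whenever $k \in W_{(a,j)}$ and an edge $(k,j)$ whenever $j \in W_{(b,k)}$. Since each dangling set has size at least $13n/32 > n/4$, there are more than $(n/2 \cdot n/4)\cdot 2 = n^2/4$ directed edges distributed over only $n^2/4$ unordered pairs, so some pair $(j,k)$ carries both directions; that mutually good pair is the one that gets glued. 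With this global counting substituted for your sequential selection, your proof goes through and is the paper's proof; without it, the crux of the argument is missing.
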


\begin{proof}
    Suppose for contradiction that a deterministic algorithm $\mathcal{A}$ achieves rendezvous within $f(n) = n/32$ rounds for the $(\Delta, \delta, 1)$-rendezvous problem of $\Delta = \Theta(n)$ and $\delta = \Theta(n)$. Let $[1, n]$ be the domain of vertex IDs. 

    We select $[1, n/2]$ and $ j \in [n/2+1, n]$ as the ID space of the execution of the agent $a$, denoted by $ID_a$. We choose $v_0^a = j$ as the initial vertex of $a$, and construct $G_{(a, j)}$ by using Lemma \ref{lma:constructgraph}. Similarly, we adaptively construct the graph instance according to the agent $b$'s moves alone. We select $[n/2+1, n]$ and $ k \in [1, n/2]$ as the ID space, denoted by $ID_b$. We choose $v_0^b = k$ as the initial vertex of $b$, and construct $G_{(b, k)}$ by also using Lemma \ref{lma:constructgraph}.

    Now we consider a directed bipartite graph $G' = ([1, n/2], [n/2 + 1, n], E)$. The edge set $E$ is defined as
    $E = \{ (x, y) \mid (x = j \wedge y \in W_{(a, j)}) \vee (x = k \wedge y \in W_{(b, k)})) \}$ for all $j$ and $k$. Since we have $|W_{(a, j)}| \ge (13/32)n > n/4$ and $|W_{(b, k)}| \ge (13/32)n > n/4$ for all $j$ and $k$, the total number of directed 
    edges is more than $(n/2 \cdot n/4) \cdot 2 = n^2 / 4$. This means that there exists at least one pair $(j, k)$ such that both $(j, k)$ and $(k, j)$ are contained in $E$.
    Finally we construct the whole graph instance. Prepare $G_{a, j}$ and $G_{b_k}$ as the subgraphs of the constructed instance. Then we add an edge between $j$ and $k$. We augment edges between any vertices in $W_{(a, j)} \setminus \{ k \}$ and in $W_(b, k) \setminus \{ j \}$ respectively. By the condition (ii) of Lemma~\ref{lma:constructgraph}, it is easy to verify that the minimum degree of the constructed instance is $\Theta(n)$. In this graph, consider the execution of $\mathcal{A}$ where two agents $a$ and $b$ are respectively located at $j$ and $k$. By the condition (i) of Lemma~\ref{lma:constructgraph}, it is guaranteed that each agent does not pass through edge $(j, k)$ in the first consecutive $n/32$ rounds. That is, the algorithm does not achieve rendezvous within $f(n)$ rounds. This is a contradiction.
\end{proof}

\section{Conclusion}
In this paper, we consider the neighborhood rendezvous problem, and propose two randomized algorithms for solving it. The first algorithm achieves rendezvous in $O \left( \frac{n}{\delta} \log^3 n + \sqrt{\frac{n \Delta}{\delta}}\log n \right)$ rounds with high probability for graphs of minimum degree $\delta = \omega(\sqrt{n} \log n)$. 
The second algorithm achieves rendezvous in $O\left( \frac{n}{\delta} \log^2 n + \frac{n}{\sqrt{\delta}}\log^2 n \right)$ rounds with high probability. It does not use whiteboards. We also presented four impossibility results for sub-linear time rendezvous, where each result respectively considers four unconventional assumptions of our algorithm, that is, bounded minimum degrees, accessibility to neighborhood IDs, initial distance one, and randomization. One can obtain the $\Omega (n)$-round lower bound if either of them is removed. Therefore we conclude that our algorithms run under a minimal assumption.

\bibliographystyle{plain}
\bibliography{refer}

\end{document}